\pgfplotsset{compat=1.7}
\newtheorem{theorem}{Theorem}
\newtheorem{proposition}[theorem]{Proposition}
\newtheorem{lemma}[theorem]{Lemma}
\newtheorem{corollary}[theorem]{Corollary}
\newtheorem{definition}{Definition}
\newtheorem{remark}{Remark}
\newtheorem{modeling}{Modeling}
\newtheorem{example}{Example}
\DeclareMathOperator{\SM}{{SM}}
\DeclareMathOperator{\HS}{{HS}}
\DeclareMathOperator{\rk}{{rk}}
\DeclareMathOperator{\stab}{{stab}}
\newcommand{\plucker}{\mathcal P}
\newcommand{\idealSM}{\mathcal{I}} 
\newcommand{\idseq}{\mathcal I_{eq}}
\newcommand{\Seq}{{\mathcal{S}_{eq}}}
\newcommand{\piC}{\Pi_{C}}
\newcommand{\piU}{\Pi_{U}}
\newcommand{\piS}{\Pi_{S}}
\newcommand{\minor}[2]{\left\vert #1 \right\vert_{#2}}
\newcommand{\twistedbinom}[2]{
\left[\substack{#1\\#2}\right]
}
\newcommand{\eqdef}{\stackrel{\text{def}}{=}}
\newcommand{\any}{*}
\renewcommand{\vec}[1]{\ensuremath{\mathbf{#1}}}
\newcommand{\zerom}{\vec{0}}
\newcommand{\mat}[1]{\boldsymbol{#1}}
\newcommand{\Cm}{{\mat C}}
\newcommand{\Fm}{{\mat F}}
\newcommand{\Um}{{\mat U}}
\newcommand{\Ym}{{\mat Y}}
\newcommand{\xv}{{\mat X}}
\newcommand{\ev}{{\mat e}}
\newcommand{\fq}{\mathbb K}
\begin{document}
\title{Computation of the Hilbert Series for the  Support-Minors Modeling of the MinRank Problem}
             
\author[1]{Magali Bardet\thanks{magali.bardet@univ-rouen.fr}}
\author[1]{Alban Gilard\thanks{alban.gilard@univ-rouen.fr}}
\affil[1]{Université de Rouen Normandie, LITIS UR 4108}

\maketitle
     
\begin{abstract}
  The MinRank problem is a simple linear algebra problem: given
  matrices with coefficients in a field, find a non trivial linear
  combination of the matrices that has a small rank.

  There are several algebraic modelings of the problem. The main ones
  are: the Kipnis-Shamir modeling, the Minors modeling and the
  Support-Minors modeling. The Minors modeling has been studied by
  Faugère et al. (2010), 
  where the authors provide an analysis of the complexity of computing
  a Gröbner basis of the modeling, through the computation of the
  exact Hilbert Series for a generic instance. For the Support-Minors
  modeling, the first terms of the Hilbert Series are given by Bardet
  et al. (2020) 
  based on heuristic and experimental work.

  In this work, we provide a formula and a proof for the complete
  Hilbert Series of the Support Minors modeling for generic
  instances. This is done by adapting well known results on
  determinantal ideals to an ideal generated by a particular subset of
  the set of all minors of a matrix of variables. We show that this
  ideal is generated by standard monomials having a particular
  shape, and derive the Hilbert Series by counting the number of such
  standard monomials.

  This work allows to make a precise comparison between the Minors and
  Support Minors modeling, and a precise estimate of the complexity of
  solving MinRank instances for the parameters of the Mirath signature
  scheme that is currently at the second round of the NIST
  standardization process for Additional Digital Signature Schemes.
\end{abstract}

\paragraph{keyword}{MinRank, Support Minors modeling, Determinantal ideals, Standard monomials, Hilbert series, Gröbner bases, Multivariate cryptography}
\section{Introduction}
\label{sec:intro}
The MinRank problem is a very simple and classical linear algebra
problem: find a non-trivial linear combination of given matrices that
has a small rank. This problem has been studied for years: its
NP-hardness has been proven in \citet{BFS99}.
It has many applications in various fields
(e.g. robotics, real geometry) and plays a central role in public key
cryptography, especially since the beginning of the NIST Post-Quantum
Standardization
Process\footnote{\url{https://csrc.nist.gov/pqc-standardization}}. It
was for instance used by \citet{DCPSYKP20} to attack Rainbow, a
signature scheme that was a finalist at the third round of the NIST
call. It is exactly the decoding problem for matrix codes in
rank-metric code-based cryptography. The security of the MIRA
\cite{mira} and MiRitH \cite{mirith} signature schemes, that have
merged to Mirath \cite{mirath} for the second round of the additional call for
Digital Signature
schemes\footnote{\url{https://csrc.nist.gov/Projects/pqc-dig-sig/}},
is based on the hardness of solving uniformly random instances of
MinRank.  Hence, analyzing the complexity of solving the problem is of
greatest importance, in particular for generic instances.

We focus in this paper on the algebraic modelings of the problem. The MinRank problem can be rephrased as the problem of finding the set of points at which a matrix, whose entries are linear forms, has a small rank. 
We analyze directly the {\it Generalized MinRank Problem} (GMR), where the entries of the matrix are homogeneous polynomials of some degree $D$.

\begin{definition}[Homogeneous Generalized MinRank Problem]
  Let $\fq$ be a field, $r$ and $D$ two integers, and $\Fm$ a $m \times n$ matrix
  \begin{align*}
    \Fm = \begin{pmatrix}
      f_{1, 1} & \cdots & f_{1, n} \\
      \vdots & & \vdots \\
      f_{m, 1} & \cdots & f_{m, n}
    \end{pmatrix}.
  \end{align*}
  where $f_{i, j}$ is an homogeneous polynomial in
  $\fq[x_1,\dots, x_K]$ of degree $D$. We want to compute the set of
  points in the algebraic closure $\overline{\fq}$ of $\fq$ at which
  the evaluation of $\Fm$ has rank at most $r$.
\end{definition}

This set of points can be described by a Gröbner basis of the ideal generated by any algebraic modeling of the problem. It is well known since the work from~\citet{L83} that a Gröbner basis can be computed by linear algebra on a finite number of Macaulay matrices\footnote{A Macaulay matrix in degree $d_x$ represents a  basis of the augmented $\fq$-vector space in degree $d_x$ obtained by multiplying all the equations from the algebraic system by all the possible monomials in $x$ to get polynomials of degree $d_x$.}. The Hilbert series of an ideal provides the rank of all these Macaulay matrices, and knowing these values allows to predict an upper bound on the complexity of the computation.
Note that for cryptographic applications, this set of points reduces most of the time to only one (projective) point, whose coordinates are in $\fq$.

\paragraph{Previous work}
The three main algebraic modelings for the GMR Problem are the
Kipnis-Shamir (KS) modeling~\cite{KS99}, the Minors modeling~\cite{FSS10}
and the Support Minors (SM) modeling~\cite{BBCGPSTV20}.

The Kipnis-Shamir modeling is constructed from the fact that
$\rk(\Fm)\le r$ if and only if its kernel contains at least $n-r$
linearly independent vectors. This modeling is intrinsically affine,
and the complexity of solving the (KS) algebraic system is not well
understood and seems hard to predict. Indeed, it has been shown independently by
\citet{BB22} and \citet{GD22} that the
ideal generated by the (KS) system is equal to the ideal generated by
the affine version of the (SM) system, and that both the Minors and
(SM) equations are produced after a degree fall in degree $r+2$ during a computation of
the Gröbner basis on the (KS) system for a graded monomial ordering (see~\cite[Remark 1]{BB22}).

The Minors modeling is obtained by considering the system of all the
minors of $\Fm$ of size $r+1$, whose associated variety is the set of
solutions of the MinRank problem.  The Minors system of equations has
been thoroughly analyzed by \citet{FSS10,FSS13},
where the authors provide an analysis of the complexity of computing a
Gröbner basis of the modeling, through the computation of the exact
Hilbert series for a generic homogeneous instance. From the Hilbert
series, assuming a variant of Fröberg's conjecture for the
overdetermined case, it is possible to derive for instance the exact
degree of regularity of the Minors system for a generic overdetermined
MinRank problem (that is the use-case in cryptography), and to give a
complexity estimate for the cost of computing the Gröbner basis and
the solutions.

The origin of the Support Minors
modeling comes from the fact that if the matrix of unknowns $\Cm$ of
size $r\times n$ represents a basis of the row space of $\Fm$, then any row
of $\Fm$ is linearly dependent from the rows of $\Cm$, i.e. the
matrices $\Cm_\ell \eqdef
\begin{pmatrix}
\Cm\\f_{\ell,1} \;\dots\;f_{\ell,n}
\end{pmatrix}$ are of rank at most $r$ for all $\ell\in\{1,\dots,m\}$. This is equivalent to the fact that all maximal minors of those matrices are zero.
The main idea behind the Support Minors modeling consists in applying
Laplace expansion along the last rows, and making a change of variables for
the Plücker coordinates $c_I=\minor{\Cm}{\any,I}$ (where $\minor{\Cm}{\any,I}$ is the maximal minor of $\Cm$ over the columns in a set $I$). The interest of
such a change of variable is to reduce the complexity of computing a
Gröbner basis of the system by a factor $r!$, as we replace minors of
$\Cm$ with $r!$ coefficients by a new variable. This leads to the system of polynomials \eqref{eq:SM} described below.
\begin{modeling}[Support Minors Modeling (SM)~\cite{BBCGPSTV20}]
  Let $\Fm\in \fq[\xv]^{m\times n}$ be a Generalized MinRank
  instance with degree $D$ and target rank $r$. The GMR problem can be solved by
  finding $x_1,\dots,x_K\in \fq^K$, and $(c_{I})_{I\subset\lbrace 1,\dots,n\rbrace, \#I=r}\subset \fq^{\binom{n}{r}}$
  such that
  \begin{align}\label{eq:SM}
    \left\lbrace
    \sum_{i=1}^{r+1} (-1)^i f_{\ell,j_i} c_{J\setminus\lbrace j_i\rbrace}=\zerom, \; \substack{\forall J=\{j_1,\dots,j_{r+1}\}\subset\{1,\dots,n\}, 1\le j_1 < \dots < j_{r+1}\le n, \forall \ell\in\lbrace 1,\dots,m\rbrace } \right\rbrace.
  \end{align}
  The $m\binom{n}{r+1}$ equations are bi-homogeneous of bi-degree $(D,1)$ in the $K$ \emph{linear
    variables} $\xv = (x_1,\dots,x_K)$ and the $\binom{n}{r}$
  \emph{minor variables} $c_{I}$, 
  for all $I\subset\lbrace 1,\dots,n\rbrace, \#I=r$.  
\end{modeling}
\citet{BBCGPSTV20} provide a specific Gröbner basis
  algorithm for the SM modeling, that in the overdetermined case
  computes the unique solution of the problem by linear algebra on the
  Macaulay matrices in bi-degree $(d_x,1)$ in the $\xv$ and $c_I$ variables. They give the first terms of the Hilbert series (hence the rank of the Macaulay matrices) up to degree $r+1$ in $\xv$. The result is based on a heuristic analysis, and strengthened by experimental validation.

All the polynomials we consider here are homogeneous, and we are interested in generic systems.
\begin{definition}[Genericity]
  A property is said to be generic if it is true over a non-empty Zariski open subset.
For infinite fields $\fq$, non-empty Zariski open subsets are dense for the Zariski topology.
\end{definition}
In our context, a property on $\Fm$ is generic if  there exists a non-zero multivariate polynomial $h$ such that the property is true for any instance $\Fm$ such that $h$ does  not vanish on the coefficients of the polynomials $f_{i,j}$ in $\Fm$.
\paragraph{Main results}
In this paper, we analyze the ideal $\idealSM$ generated by the Support Minors system in the subalgebra $\fq[\xv][\Cm_I]$ of $\fq[\xv,\Cm]$ generated by all the maximal minors $(\Cm_I)_{I\subset\{1,\dots,n\}, \#I=r}$ of the matrix $\Cm$.
For each integer $d_c\ge 1$, we consider the $\fq[\xv]$-module $\fq[\xv]_{d_c}$ generated by the set of polynomials in $\fq[\xv,\Cm_I]$ of degree exactly $d_c$ in the $\Cm_I$'s, and $\idealSM_{d_c}=\idealSM\cap\fq[\xv]_{d_c}$ the submodule of $\idealSM$ generated by the Support Minors equations in degree $d_c$. Our main result is  the computation of the Hilbert series of $\idealSM_{d_c}$ for the GMR Support Minors system when the matrix $\Fm$ is generic:
\begin{align}
  \HS_{\fq[\xv]_{d_c} / \idealSM_{d_c}}(t) =\left[ \frac{\det(A_{d_c}(t^D))(1-t^D)^{(m-r)(n-r)}}{t^{D\binom{r}{2}} (1-t)^{K}}\right]_+\label{eq:HSAfinal}
\end{align}  
where
$A_{d_c}(t) = \left( \sum\limits_{\ell \geq 0}
  \binom{n+d_c-i}{\ell+d_c} \binom{m-d_c-j}{\ell} t^\ell \right)_{1
  \leq i, j \leq r}$ and the notation $[S(t)]_+$ stands for the power
series obtained by truncating a power series $S(t)\in\mathbb Z[[t]]$
at its first non-positive coefficient.  We prove that this result is
generic for all $K\ge m(n-r)$.  For smaller values of $K$, the
genericity of the result depends on a variant of Fröberg conjecture.
Following our experiments on small values of the parameters for $D=1$, we
  believe this result is still true for any $K$, provided that
  $d_c\le m-r$ and that the series in~\eqref{eq:HSAfinal} is
  truncated. It seems experimentally to be also true for $d_c>m-r$, except when
  $K\in\{(m-r)(n-r),\dots,m(n-r)-1\}$, but we do not have an explanation
  for this at the moment.

For cryptographic application, the instances are taken with coefficients in a finite field. In this case, our experiments indicates that for a large enough finite field, we can expect that the probability that the Hilbert series of a system is~\eqref{eq:HSAfinal} is large, and tends to 1 when the size of the field tends to infinity. We give examples in Section~\ref{sec:experiments}.

\begin{remark}
  Note that our formula~\eqref{eq:HSAfinal} extends for any $d_c\ge 1$ the formula for the Hilbert series of the
  Minors system from \cite{FSS13} ($d_c=0$). For $r=0$ we get the Hilbert
  series of a generic system of $mn$ equations of degree $D$ in $K$
  variables. 
\end{remark}
To obtain this result, we adapt the work from~\citet{FSS13} to
our context. The ideal $\idealSM$ in $\fq[\xv,\Cm_I]$ generated by the maximal
minors of all the matrices $\Cm_\ell$ for $\ell\in\{1,\dots,m\}$ is exactly
the ideal generated by the minors of size $r+1$ of the matrix
$\binom{\Cm}{\Fm}$ that contain the first $r$  rows. This leads us to study the properties 
of particular determinantal ideals generated by minors of a matrix $\binom{\Cm}{\Um}$
that contain the first $r$ rows, where the entries of $\Um$ are variables.
As far as we know, properties of such ideals have not been studied up to now, and we call them \emph{determinantal Support Minors ideals}.
Let  $\mathcal S_{d_c}$ be the $\fq[\Um]$-module of the polynomials in this ideal that have degree $d_c$ in the $\Cm_I$'s. 
We show that the Hilbert series of $\mathcal S_{d_c}$ is given by
\begin{align}
  \HS_{\fq[\Um]_{d_c} / \mathcal{S}_{d_c}}(t) = \frac{\det(A_{d_c}(t))}{t^{\binom{r}{2}} (1-t)^{(m+n-r)r}}\label{eq:HSdcU}
\end{align}  
with the same matrix $A$ as above. The result evaluates to the well known Hilbert series for determinantal rings (see~\cite{BCRV22} for instance) for $d_c=0$.

\paragraph{Organization of the paper}
Section~\ref{sec:SMHSdet} is devoted to the computation of the Hilbert
series~\eqref{eq:HSdcU} for determinantal Support Minors ideals (i.e. for the matrix $\binom{\Cm}{\Um}$). This is done by showing that the module $\mathcal S_{d_c}$ is generated by standard bitableaux with a specific shape. All useful definitions and results on standard bitableaux are recalled in Section~\ref{sec:preliminaries}. 

In Section~\ref{sec:SMHSgeneral}, we show how the properties of determinantal Support Minors ideals with variables can be transferred to the ideal $\idealSM$ for the system (SM) over a non-empty  Zariski open set, as long as $K\ge m(n-r)$. The main contribution of this section is the proof that $\fq[\Um, \Cm_I] / \mathcal S$ is Cohen-Macaulay, where
$\mathcal S=\oplus_{d_c\ge 0} \mathcal S_{d_c}$ and $\mathcal S_0=\mathcal D$
is the determinantal Minors ideal.

Section~\ref{sec:complexity} contains a complexity analysis of the Support Minors system for cryptographic applications using the results from this paper. In particular,  the Mirath parameters over $\mathbb F_{16}$ where chosen according to the best currently known attacks. However, for the Support Minors modeling, only the cases $d_c=1$ and $d_x\le r+1$ where taken into account, as no results where known in the other cases. We are now able to compute as well an upper bound for any $(d_x,d_c)$ and the resulting complexities are almost the same, even if we obtain slightly better results with our new complexity estimates for $d_x\ge r+2$. This reinforces the security analysis and confirms the choice of the Mirath parameters.
We also show that the chosen value of $r$ is quite optimal.

Finally section~\ref{sec:experiments} discusses experimental results to comfort our Fröberg-like genericity conjectures.
\section{Preliminaries}
\label{sec:preliminaries}
We provide in table~\ref{table:notations} a glossary of all structures, sets and ideals used in this paper.
In all the paper we consider homogeneous polynomials. Indeed, theoretical results are much easier to obtain in the homogeneous case. Moreover, as done in~\cite{BBCGPSTV20}, the analysis in the affine case can be reduced to this case by homogenizing the entries of $\Fm$.
\subsection{General Notation}
We denote by $\binom{n}{m} = \frac{n!}{m!(n-m)!}$ the classical
binomial coefficient, and by $\twistedbinom{n}{m}=\binom{n+m}{m}$ the
twisted binomial coefficient, for any integers $m,n\in\mathbb N$. Both are defined to be zero for $m<0$. We also use  the extended
definition of the binomial coefficient $\binom{a}{k}=\frac{a(a-1)\dots(a-k+1)}{k!}$ when $k\in\mathbb N$ and $\alpha$ is any complex number.

Let $\fq$ be a field.
By a slight abuse of notation, we will denote by $\Ym$ both a matrix of
unknowns $\Ym=(y_{i,j})$ and the set of unknowns $\{y_{i,j}\}$. Then, for any matrix of unknowns $\Ym$ of size $m\times n$,  $\fq[\Ym]$ is the polynomial algebra in the $mn$ variables $(y_{i,j})$.
\subsection{Standard monomials}
We recall in
this section the definitions and properties of standard monomials, that will be used in the next section to compute Hilbert series. We refer to~\cite[Chapter 3]{BCRV22} for more details.
Let us consider the set of variables $\{u_{i, j}\}_{i\in\{1,\dots, m\}, j\in\{1,\dots, n\}}$ and the matrix  of unknowns
\begin{align*}
  \Um = (u_{i,j}) = \begin{pmatrix}
    u_{1, 1} & \cdots & u_{1, n} \\
    \vdots & & \vdots \\
    u_{m, 1} & \cdots & u_{m, n}
  \end{pmatrix}.
\end{align*}
The  minors of $\Um$ can be represented as   bivectors
\begin{align*}
  (a | b) = (a_p, \dots , a_1 | b_1, \dots , b_p),
\end{align*}
where $1 \leq a_1 < \dots  < a_p \leq m$ and $1 \leq b_1 < \dots  < b_p \leq n$ represent respectively the rows and columns indexes of $\Um$ which define a minor of size $p$. We call $p$ the \emph{length} of $(a|b)$. Note that for all $i\in\{1,\dots,p\}$, we have $a_i\ge i$ and $b_i\ge i$.
	
We can define a partial order on the set of bivectors (and so on the set of minors of $\Um$) by saying that $(a_p, \dots , a_1 | b_1, \dots , b_p) \leq (\alpha_s, \dots , \alpha_1 | \beta_1, \dots , \beta_s)$ if and only if:
\begin{enumerate}
\item $p \geq s$, and
\item  $a_i \leq \alpha_i$ and $b_i \leq \beta_i$ for all $1 \leq i \leq s$.
\end{enumerate}
We give an example of ordering in figure~\ref{fig:hasse} in the appendix.	
\begin{definition}[Standard monomial, standard bitableau]
  The product $Y = \gamma_1 \dots  \gamma_t$ of $t$ minors of $\Um$ such that $\gamma_i \leq \gamma_{i+1}$ for all $i\in\{1,\dots, t-1\}$ is called a \emph{standard monomial} of degree $d = p_1 + \dots  + p_t$, where $p_i$ is the length of $\gamma_i$. We can identify these standard monomials with a more visual representation in the form of a bitableau by writing vertically each bivector $(a_{i,p_i},\dots,a_{i,1} |b_{i,1},\dots,b_{i,p_i})$ associated to each $\gamma_i$:
  \setlength{\arrayrulewidth}{0.04em}
  \begin{align*}
    \ytableausetup{mathmode, boxframe=normal, boxsize=2em}
    \begin{ytableau}
      \cline{2-4}
      a_{1, p_1} & \none[\dots] & \none[\leftarrow] & \none[\dots] & a_{1, 2} & a_{1, 1}\\
      \none & a_{2, p_2} & \none[\dots] & \none[\dots] & a_{2, 2} & a_{2, 1}\\
      \cline{2-6}
      \none &  \none & \none & \none[\downarrow] &\none & \none[\vdots]\\
      \none & \none & a_{t, p_t} & \dots & a_{t, 2} & a_{t, 1}\\
    \end{ytableau} \;
    \ytableausetup{mathmode, boxframe=normal, boxsize=2em}
    \begin{ytableau}
      \cline{3-5}
      b_{1, 1} & b_{1, 2} & \none[\dots] & \none[\rightarrow] & \none[\dots] & b_{1, p_1}\\
      b_{2, 1} & b_{2, 2} & \none[\dots] & \none[\dots] & b_{2, p_2} \\
      \cline{1-5}
      \none[\vdots] & \none & \none[\downarrow] &  \none\\
      b_{t, 1} & b_{t, 2} & \dots & b_{t, p_t}\\
    \end{ytableau}	
  \end{align*}
  The arrows denote the direction of increase of the coefficients.
  Such a bitableau, filled to represent a standard monomial, is called a standard bitableau.
  We define the shape\footnote{We take the definition of shape in~\cite{G94} rather than the one in~\cite{BCRV22} to get the formula in Proposition~\ref{prop:stab}, but it is equivalent.} of $Y$
  as the vector $v = (v(1), \dots , v(p_1))$ such that
  $v(i) = \#\{j : p_j \geq i\}$, and its length as $p_1$. The integer
  $d = v(1) + v(2) + \dots  + v(p_1)$ is the degree of $Y$. We denote
  by $v \overset{p_1}{\rightsquigarrow} d$ the set of all standard
  bitableaux of length $p_1$ and degree $d$, i.e. the tuples
  $v=(v(1),\dots,v(p_1))$ such that $d\ge v(1)\ge \dots\ge v(p_1)$ and $\sum_{i=1}^{p_1}v(i)=d$.
\end{definition}
	
\begin{example}
  For $m = 5$, $n = 4$, the bitableau
  \begin{align*}
    \ytableausetup{mathmode, boxframe=normal, boxsize=2em}
    \begin{ytableau}
      5 & 3 & 2 & 1 \\
      \none & 4 & 3 & 1 \\
      \none & \none & \none & 5 \\
    \end{ytableau} \;
    \ytableausetup{mathmode, boxframe=normal, boxsize=2em}
    \begin{ytableau}
      1 & 2 & 3 & 4\\
      1 & 2 & 3 \\
      2 \\
    \end{ytableau}	
  \end{align*}
  is standard of shape $(3, 2, 2, 1)$, length $4$ and degree $8$. 
\end{example}
	
In the next section, we will construct a basis of the determinantal $\SM$
module using these standard bitableaux, which will be possible mainly
thanks to the following theorem.
\begin{theorem}[Straightening Law { \cite[e.g.][p.72]{BCRV22}}]
  \label{thm:straightening}
  We have the following statements:
  \begin{enumerate}
  \item The standard bitableaux form a basis of $\fq[\Um]$ as a
    $\fq$-vector space.
  \item \label{item:2} If $\gamma$ and $\delta$ are two minors of $\Um$ such that
    $\gamma \delta$ is not standard, then we can write
    $\gamma \delta = \sum_{i} z_i \epsilon_i \eta_i$, with for all
    $i$, $z_i \in \fq$, $\epsilon_i < \gamma$, $\eta_i > \delta$ and
    $\epsilon_i \eta_i$ is standard ($\eta_i$ may be $(|)=1$).
  \item \label{item:3} Let $Y = \delta_1 \dots \delta_t$ be a non-standard
    bitableau. Then we can recover the expression of $Y$ in the basis
    of the standard bitableaux by applying successively the
    straightening relations in~\eqref{item:2}.
  \item \label{item:4} Let $Y$ be a bitableau, and $\gamma_1\dots\gamma_t$ a standard bitableau appearing in the standard representation of $Y$. Then $\gamma_1\le \delta$ for all factors $\delta$ of $Y$.
  \end{enumerate}
\end{theorem}
\begin{remark}\label{remark:li}
  The sum in \eqref{item:2} is finite, as the number of
  bivectors is finite. Note that the theorem remains true if we replace $\fq$ by an arbitrary commutative ring: 
  as explained in~\cite[Remark 3.2.9 p. 78]{BCRV22}, the standard bitableaux indeed generate $\mathbb Z[\Um]$ as a $\mathbb Z$-module, and remain linearly independent over any ring $R[\Um]=R\otimes_{\mathbb Z}\mathbb Z[\Um]$.
  We will use it later over a polynomial ring $\fq[\Ym]$ or $\fq[\Ym, \Cm_I]$, for some unknowns $\Ym$.
\end{remark}
The enumeration of standard tableaux is highly studied in combinatorics and, in particular, we have an explicit formula for the number of standard tableaux with a given shape.
\begin{proposition}[{\cite[e.g.][§14.3]{G94}}]
  \label{prop:stab}
  The number of standard tableaux of shape $v = (v(1), \dots , v(p))$, with $0 \leq v(p) \leq \cdots \leq v(1)$, whose coefficients are bounded by $m$ is given by the formula :
  \begin{align}
  		\stab(m, v) = \det \left(
  		\begin{bmatrix}
  			m-j \\
  			v(i)+j-i
  		\end{bmatrix} \right)_{1 \leq i, j \leq p}\label{eq:stab}
  \end{align}
  where the entries of this $p \times p$ matrix are twisted binomial coefficients.
\end{proposition}
Then for any matrix $\Um$ of size $m\times n$, the number of standard
bitableaux of shape $v$ on the left part and $w$ on the right part
will be the product $\stab(m,v)\stab(n,w)$.

We will use algebras different from $\fq[\Um]$ but with very similar properties, in which the straightening law is always true.
\begin{definition}[Algebra with straightening law, {\cite[e.g.][§4A p. 38]{BV06}}]
Let $B$ be a ring,  $A$ a graded $B$-algebra, and $\Pi \subset A$ a finite subset with partial order $\leq$.
	$A$ is a \emph{graded algebra with straightening law} (ASL) on $\Pi$ over $B$ if the following conditions hold :
	\begin{enumerate}
		\item $A = \bigoplus_{i \geq 0} A_i$ is a graded $B$-algebra such that $A_0 = B$, $\Pi$ consists of homogeneous elements of positive degree and generates $A$ as a $B$-algebra.
		\item The products $\gamma_1 \cdots \gamma_t$, $t \in \mathbb N$, $\gamma_i \in \Pi$, such that $\gamma_1 \leq \cdots \leq \gamma_t$, are linearly independent. They are called standard monomials.
		\item (Straightening law) For all incomparable $\alpha, \beta \in \Pi$ the product $\alpha \beta$ has a representation \[ \alpha \beta = \sum a_{\mu} \mu, \hspace{0.5cm} a_{\mu} \in B, \; a_{\mu} \neq 0, \; \mu \; \text{standard monomial} \] satisfying the following condition: every $\mu$ contains a factor $\gamma \in \Pi$ such that $\gamma < \alpha, \beta$.
	\end{enumerate}	 
\end{definition}

\subsection{Plücker algebra}
\label{sec:plucker}
For a matrix $\Cm\in\fq^{r \times n}$ of unknowns with $n\ge r$, for
any subset $I\subset\{1,\dots,n\}$ of size $r$, we denote by $\Cm_I$ the
maximal minor of $\Cm$ with columns in $I$, and by $c_I$ a variable
representing this polynomial. The \emph{Plücker algebra} is the
subalgebra of $\fq[\Cm]$ given by
$\fq[(\Cm_I)_{I\subset\{1,\dots,n\},\#I=r}]=\fq[\Cm_I]$ for short. It is the homogeneous coordinate ring of the Grassmannian variety parametrizing $r$-dimensional vector subspaces of $\fq^n$. The
Plücker algebra can also be viewed as the quotient
\begin{align}
  \label{eq:plucker}
\fq[c_I]/\plucker \simeq \fq[\Cm_I]
\end{align}
where $\plucker$ is the Plücker ideal, which
is the kernel of the map $\fq[c_I]\to\fq[\Cm_I] : c_I \mapsto \Cm_I$
and is generated by the so called Plücker relations (see for
instance~\cite[Corollary 3.2.7 p. 77]{BCRV22}). These Plücker relations are
those described by~\eqref{item:2} in the straightening law from theorem~\ref{thm:straightening}. For any fixed degree $d_c\ge 0$, we can view $\fq[\Cm_I]_{d_c}$ as a free $\fq$-module of rank the number of standard bitableaux of degree $d_c$.
According to \cite[Corollary 3.2.6]{BCRV22}, we have
\begin{align}
  \dim(\fq[\Cm_I])=r(n-r)+1.\label{eq:dimplucker}
\end{align}
More generally, for any polynomial ring $\fq[\Ym]$ in some unknowns $\Ym$, we will denote by $\fq[\Ym][\Cm_I]_{d_c}$ (or $\fq[\Ym]_{d_c}$ for short if it is clear from the context) the $\fq[\Ym]$-module generated by the set of polynomials in $\fq[\Ym,\Cm_I]$ of degree exactly $d_c$ in the $\Cm_I$'s.

Thanks to the straightening law and Proposition~\ref{prop:stab}, we have:
\begin{enumerate}
\item $\fq[\Ym,\Cm_I] = \bigoplus_{d_c \geq 0} \fq[\Ym]_{d_c}$,
\item for all $d_c \geq 1$, $\fq[\Ym]_{d_c}$ is a free
  $\fq[\Ym]$-module of rank the number of standard monomials of $\Cm$
  of shape $(d_c, \dots, d_c)$ ($r$ times):
  \begin{align}
    \rk(\fq[\Ym]_{d_c}) = \det
  \left(  \begin{bmatrix}
      n-j\\
      d_c+j-i		
    \end{bmatrix}_{1 \leq i, j \leq r}\right).\label{eq:rkCm}
  \end{align}
\end{enumerate}
	
\section{Hilbert series of  determinantal Support Minors ideals}
\label{sec:SMHSdet}
Let $\Cm$, $\Um$ be two matrices of variables of size $r\times n$ and $m\times n$.
The goal of this section is to compute the Hilbert series for the
 \emph{determinantal Support Minors system}, i.e. the set of maximal minors of
the matrix $\binom{\Cm}{\Um}$ 
that contains the $r$ rows of $ \Cm$, as an ideal in the  Plücker subalgebra. To this end, we describe a
$\fq$-basis of the ideal in terms of standard monomials, and derive a
first formula~\eqref{eq:HS1} for the Hilbert series. Then, by applying several
formulae from combinatorics, we simplify the formula to get
\eqref{eq:HSB} in theorem~\ref{thm:HSdelta} and~\eqref{eq:HSdcU}.
\subsection{A \texorpdfstring{$\fq$}{K}-basis with standard monomials}
\label{sec:computationHS}
Let
\begin{align}
  \Seq \eqdef \left\{ (i_p,\dots,i_1,r,\dots,1 | b_1, \dots, b_{p+r})
  : \substack{p \geq 1,\; r+1 \leq i_1 < \dots < i_p \leq r+m, \; 1 \leq j_1 < \dots <
  j_{p+r} \leq n} \right\}\label{eq:defSeq}
\end{align}
be the set of bivectors corresponding to minors of
$\binom{\Cm}{\Um}$ that contain all the rows of $\Cm$ and at least one row of
$\Um$ ($p\ge 1$). We write $\idseq = \langle \Seq\rangle$ the ideal of
$\fq[\Um, \Cm]$ generated by $\Seq$.  Then $\idseq$ is exactly
the ideal generated by the support-minors equations in $\fq[\Um, \Cm]$
(without the change for the Plücker coordinates). The following proposition shows that we can see $\idseq$ as a $\fq$-vector space generated by standard monomials.
\begin{proposition}[{\cite[Proposition 3.4.1 p. 83]{BCRV22}}]\label{prop:basis of SM}
  The set $\mathcal Y_{eq}$ of all standard monomials $Y = \gamma_1\dots\gamma_t$ with $\gamma_1 \in \Seq$ form a basis of $\idseq$ as a $\fq$-vector space. 
\end{proposition}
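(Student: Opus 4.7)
The plan is to establish three properties of $\mathcal{Y}_{eq}$: containment in $\mathcal{S}$, $\fq$-linear independence, and the spanning property over $\fq$. Containment is immediate, as any $Y = \gamma_1 \cdots \gamma_t \in \mathcal{Y}_{eq}$ factors as $\gamma_1 \cdot (\gamma_2 \cdots \gamma_t)$ with $\gamma_1 \in \Seq$, so $Y \in \mathcal{S}$. Linear independence is inherited from item~1 of the Straightening Law, since $\mathcal{Y}_{eq}$ is a subset of the set of all standard bitableaux of $\fq[\Um,\Cm]$, which already form a $\fq$-basis.

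The core of the proof is the spanning property. I would first reduce it to the following claim: for any $\gamma \in \Seq$ and any standard bitableau $Y' = \delta_1 \cdots \delta_s$, the product $\gamma \cdot Y'$ lies in $\mathrm{Span}_{\fq}(\mathcal{Y}_{eq})$. This reduction works because every $f \in \mathcal{S}$ is a finite sum $\sum_i p_i \gamma_i$ with $\gamma_i \in \Seq$ and $p_i \in \fq[\Um,\Cm]$, and by item~1 each $p_i$ decomposes as a $\fq$-linear combination of standard bitableaux. To prove the claim, I apply item~3 of the Straightening Law to the (generally non-standard) bitableau $\gamma \delta_1 \cdots \delta_s$: iterated straightening relations express it as a $\fq$-linear combination of standard bitableaux $Y_k = \epsilon_{k,1} \cdots \epsilon_{k,u_k}$. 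By item~4, the leading factor satisfies $\epsilon_{k,1} \leq \delta$ for every factor $\delta$ of $\gamma \delta_1 \cdots \delta_s$, and in particular $\epsilon_{k,1} \leq \gamma$.

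The main subtlety, and what I expect to be the crux, is a small combinatorial check: the set $\Seq$ is downward closed under $\leq$, i.e., if $\epsilon \leq \gamma$ and $\gamma \in \Seq$, then $\epsilon \in \Seq$. Writing $\gamma$ with row indices $1, 2, \dots, r, i_1, \dots, i_p$ (where $i_1 \geq r+1$ and $p \geq 1$) and $\epsilon$ with row indices $a_1 < \cdots < a_{p''}$, the definition of the partial order gives $p'' \geq p + r$ and $a_j \leq$ (the $j$-th smallest row of $\gamma$) for $1 \leq j \leq p+r$. Combined with strict increase, $a_j \leq j$ for $j \leq r$ forces $a_j = j$ on this range, and $a_{r+1}$ exists and satisfies $a_{r+1} > r$, hence $a_{r+1} \geq r+1$, placing the remaining rows in the $\Um$ block. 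A parallel argument works for the column indices, which only need to be strictly increasing and bounded by $n$.

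Putting the pieces together, each $\epsilon_{k,1}$ lies in $\Seq$, so each $Y_k$ belongs to $\mathcal{Y}_{eq}$, and the spanning property follows. The argument is essentially a poset-ideal principle: $\mathcal{S}$ is generated by a $\leq$-downward closed set of bivectors, and hence admits a standard-monomial basis consisting of those standard bitableaux whose leading factor belongs to that set. I would expect the downward-closure verification to be the only non-routine combinatorial input, everything else being a direct application of the items of the Straightening Law recalled in Section~\ref{sec:preliminaries}.
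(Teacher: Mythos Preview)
Your proposal is correct and follows essentially the same route as the paper's proof: both verify that $\Seq$ is downward closed under $\leq$ (the paper does this first, you place it last), then combine containment $\mathcal Y_{eq}\subset\mathcal S$, linear independence from item~(1) of the Straightening Law, and spanning via item~(4) applied to products $\gamma Y$ with $\gamma\in\Seq$ and $Y$ standard. The paper's downward-closure check is the one-line observation $i\le a_i\le\alpha_i=i$ for $1\le i\le r$, which is exactly your argument in condensed form.
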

Equivalently, the set of all standard monomials $Y=\gamma_1\dots\gamma_t$ with $\gamma_1 \notin \Seq$ form a basis of $\fq[\Um, \mathbf{C}]/\idseq$ as a $\fq$-vector space.
\begin{proof}
  We give the proof for the sake of completeness. 
  First, note that if $(a|b) \leq (\alpha | \beta)$ and $(\alpha|\beta)\in \Seq$ then $(a|b)\in \Seq$.
Indeed, with the notation $(a|b) = (a_{s'}, \dots, a_1  | b_1, \dots, b_{s'})$ and $(\alpha | \beta) = (\alpha_s, \dots, \alpha_1 | \beta_1, \dots, \beta_s) \in \Seq$,  we must have $s' \geq s \geq r+1$ and $i \leq a_i \leq \alpha_i=i$ for all $1 \leq i \leq r$. Then $(a_1, \dots, a_r) = (1, \dots, r)$ and $(a|b) \in \Seq$.

Clearly $\mathcal Y_{eq}\subset \idseq$.
The straightening law shows that the elements in $\mathcal Y_{eq}$  are linearly independent over $\fq$, and that any element in $\fq[\Um,\Cm]$ is a sum of standard monomials. Any element in $\idseq$ being then a linear combination of elements $\delta Y$ with $Y$ a standard monomial and $\delta \in \Seq$, to conclude the proof we just have to prove that any such element $\delta Y$ is a $\fq$-linear combination of elements in $\mathcal Y_{eq}$.
We can write $\delta Y$ in the basis of the standard monomials $\delta Y = \sum_{i} z_i Y_i$ such that, for all $i$, $z_i\in\fq$ and $Y_i = \gamma_{i, 1} \dots \gamma_{i, t_i}$ with $\gamma_{i, 1} \leq \delta$ according to the point $\eqref{item:4}$ of the straightening law. This implies that $\gamma_{i,1}\in \Seq$ and $Y_i \in \mathcal Y_{eq}$.
\end{proof}	
As the equations of the Support-Minors modeling are polynomials in $\Um$ and the maximal minors of $\Cm$, we would like to study the ideal generated by $\Seq$ not in $\fq[\Um, \Cm]$, but in $\fq[\Um,\Cm_I]$ the algebra generated by $\Um$ and the maximal minors of $\Cm$.
\begin{lemma} \label{lemma:Palgebre}
  The standard bitableaux
  \begin{align*}
    \ytableausetup{mathmode, boxframe=normal, boxsize=2em}
    \begin{ytableau}
      \cline{2-5}
      & \none[\dots] & \none & \none & \none[\dots] & & r & \dots & 1\\
      \cline{2-5}
      \none & \none & \none & \none & \none[\vdots]\\
      \cline{4-5}
      \none & \none & & \none[\dots] & \none[\dots] & & r & \dots & 1\\
      \cline{4-4}
      \none & \none & \none & \none & & \none[\dots] & \none & \none[\dots] & \\
      \cline{6-8}
      \none & \none & \none & \none & \none & \none & \none & \none[\vdots]\\
    \end{ytableau}
  \end{align*}
  where all the coefficients, except the ones in the top-right, are in $\{r+1, \dots, m+r \}$, form a basis of the algebra $\fq[\Um,\Cm_I]$ as a $\fq$-vector space.  As noted in remark~\ref{remark:li}, the result is still true if we replace $\fq$ by a commutative ring $R$, in which case we have a basis as a $R$-module. \end{lemma}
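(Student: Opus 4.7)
The plan is to verify the three standard properties of a basis for $\fq[\Um,\Cm_I]$, namely containment in the subalgebra, linear independence, and spanning; the first two are direct and the substantive work is in spanning.

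For containment, each row of a bitableau of the stated shape is either a ``top'' row, whose row-index set contains all of $\{1,\dots,r\}$ (possibly together with further indices in $\{r{+}1,\dots,m{+}r\}$), or a ``bottom'' row with row indices entirely in $\{r{+}1,\dots,m{+}r\}$. A top row is a minor of $\Mm$ whose first $r$ rows are precisely those of $\Cm$; Laplace expansion along those rows writes it as $\sum_J \pm \Cm_J M_J$ with $M_J$ a sub-minor of $\Um$, hence the row lies in $\fq[\Um,\Cm_I]$. A bottom row is itself a minor of $\Um$. Products of such rows therefore remain in $\fq[\Um,\Cm_I]$. Linear independence is immediate since these bitableaux are a sub-family of the standard-bitableau $\fq$-basis of $\fq[\Um,\Cm]$ provided by the Straightening Law.

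For spanning, $\fq[\Um,\Cm_I]$ is generated as a $\fq$-algebra by the $u_{ij}$ (length-$1$ bivectors on rows $\ge r{+}1$, of bottom type) and by the $\Cm_I$ (length-$r$ bivectors on rows $\{1,\dots,r\}$, of top type). Any element is thus a $\fq$-linear combination of monomials in these generators, and each such monomial is a (generally non-standard) bitableau all of whose rows are already of good shape. Iterating the Straightening Law rewrites this as a $\fq$-linear combination of standard bitableaux, so the claim reduces to showing that the procedure never introduces a row containing some but not all of $\{1,\dots,r\}$.

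This invariance is the main obstacle. I would handle the non-standard pair $\gamma\delta = \sum_i z_i \epsilon_i \eta_i$ by cases on the types of $\gamma$ and $\delta$: if both are bottom-type the straightening takes place inside $\fq[\Um]$ and produces only pure $\Um$ minors; if $\gamma$ is top-type then $\epsilon_i < \gamma$ combined with the strict increase of bivector row indices forces $(\epsilon_i)_k = k$ for $k \le r$, so $\epsilon_i$ is again top-type, and a symmetric argument controls $\eta_i$ when $\delta$ is top-type; and the genuinely mixed case, with $\gamma$ top-type of length $r$ and $\delta$ bottom-type of length $1$, is resolved by exhibiting the straightening relation explicitly as the Laplace expansion along the last row of the $(r{+}1)\times(r{+}1)$ minor of $\Mm$ on rows $\{1,\dots,r\}$ together with the $\Um$-row of $\delta$, and with column set the union of the columns of $\gamma$ and of $\delta$. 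This identity expresses $\gamma\delta$ as the sum of a single top-type length-$(r{+}1)$ bivector and products of top-type length-$r$ rows with bottom-type length-$1$ rows, all manifestly of good shape, and iterating the straightening then yields the desired standard expansion supported on the bitableaux described in the lemma.
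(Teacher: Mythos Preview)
Your overall strategy—containment via Laplace, linear independence from the straightening basis, and spanning by showing the straightening process never produces a row involving some but not all of $\{1,\dots,r\}$—matches the paper's. But two of your case arguments do not go through as written.

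First, the claimed ``symmetric argument'' for $\eta_i$ is not valid. The relation $\eta_i>\delta$ only says that the row indices of $\eta_i$ dominate those of $\delta$ coordinatewise; when $\delta$ is top-type this gives $(\eta_i)_k\ge k$, which is vacuous and does not force $(\eta_i)_k=k$. Nothing in the partial order prevents $\eta_i$ from using, say, rows $\{2,5\}$ when $r=2$.

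Second, your explicit Laplace identity handles only the pair (top of length $r$, bottom of length $1$). As soon as you apply it once you introduce a top-type minor of length $r{+}1$, and the next iteration may confront you with pairs like (top of length $r{+}1$, bottom of length $1$) or (top of length $r{+}1$, top of length $r$), which your identity does not cover.

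The missing ingredient, which the paper uses, is bi-homogeneity in the $\Cm$-entries. Every straightening relation is homogeneous for the grading by degree in the $c_{i,j}$'s, so $\deg_{\Cm}(\gamma)+\deg_{\Cm}(\delta)=\deg_{\Cm}(\epsilon_i)+\deg_{\Cm}(\eta_i)$. If at least one of $\gamma,\delta$ is top-type then (by $\epsilon_i\le\gamma$ and $\epsilon_i\le\delta$, from item~(4) of the Straightening Law applied to the two-factor product) $\epsilon_i$ is top-type, hence $\deg_{\Cm}(\epsilon_i)=r$. Since $\Cm$ has only $r$ rows, any single minor has $\Cm$-degree at most $r$; the degree count then forces $\deg_{\Cm}(\eta_i)$ to be either $r$ (when both $\gamma,\delta$ are top) or $0$ (when exactly one is top), so $\eta_i$ is again top-type or bottom-type. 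This single observation replaces both your ``symmetric argument'' and the ad~hoc Laplace computation, and it works uniformly for all lengths that can arise during the iteration.
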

\begin{proof}
  We call $\SM$-monomials (or
  equivalently $\SM$-bitableaux) the monomials such that any two consecutive minors $\gamma_{1}\gamma_{2}$
  with $\gamma_i=(a_i|b_i)$ satisfy one of the following conditions
  (where $\gamma_2$ may be $(|)=1$):
\begin{enumerate}
\item both $a_1$ and $a_{2}$ contain $\{1,\dots,r\}$,
\item or $a_1$ contains $\{1,\dots,r\}$ and $a_{2}$ involves no rows from $\{1,\dots,r\}$,
\item or both $a_1$ and $a_{2}$ involve no rows from $\{1,\dots,r\}$.
\end{enumerate}
The standard bitableaux defined in the lemma
are exactly standard $\SM$-bitableau,     and belong to $\fq[\Um,\Cm_I]$. As noted in remark~\ref{remark:li}, the standard bitableaux are
    always linearly independent over any commutative ring.  To conclude the proof, it is
    sufficient to show that any monomial $M=M_{\Cm_I}M_{\Um}$ in
    $\fq[\Um,\Cm_I]$, with $M_{\Cm_I}$ a monomial in $\Cm_I$ and
    $M_{\Um}$ a monomial in $\Um$, can be written as a sum of standard
    $\SM$-bitableaux.

The minors appearing in $M_{\Um}$ can be seen as minors in $\Um,\Cm_I$
by adding $r$ on all the coefficients. Now all the minors that appear
in this decomposition of $M=M_{\Cm_I}M_{\Um}$ have the shape $(a|b)$
where $a$ either involves all the rows from $\{1,\dots,r\}$ (the minor
divides $M_{\Cm_I}$), or none of them (the minor divides $M_{\Um}$),
hence $M$ is a $\SM$-bitableau.

We know by~\eqref{item:3} from the straightening law that any
non-standard bitableau can be expressed as a sum of standard
bitableaux by applying~\eqref{item:2} several times. To conclude the
proof, we show that each time we rewrite a non-standard
product $\gamma_1 \gamma_{2}$  in a $\SM$-monomial, the resulting monomials are still
$\SM$-monomials. For any minor $(a|b)$ in a $\SM$-monomial, if $a$ contains
$\{1,\dots,r\}$ (resp. no row from $\{1,\dots,r\}$), then all
previous (resp. following) minors do too. It remains to show that
after rewriting some $\gamma_1\gamma_2$, we get a sum of standard linearly independent terms
$\epsilon\eta$ that stay in the same case $(1)$-$(3)$ as
$\gamma_1\gamma_2$. The law tells us that $\epsilon <
\gamma_1$. Remember that all bitableaux correspond to homogeneous
polynomials, and the degree in $\Cm$ of a monomial is the number of
occurrences of the values $\{1,\dots,r\}$ in the bitableaux.  For case
$(1)$, as $\Cm$ only contains $r$ rows, necessarily also both $\eta$
and $\epsilon$ must involve the rows $\{1,\dots,r\}$.  For case $(2)$,
$\epsilon<\gamma_1$ implies that $\epsilon$ involves the rows
$\{1,\dots,r\}$, and by a degree argument $\eta$ involves no rows from
$\{1,\dots,r\}$. For case $(3)$, again by a degree argument, both
$\epsilon$ and $\eta$ involve no rows from $\{1,\dots,r\}$.
\end{proof}
This implies that, for any $d_c\ge 1$,  $\fq[\Um]_{d_c}$ is the module generated by all these standard monomials such that exactly the $d_c$ first rows have a left tableau of the form:
\begin{align*}
  \ytableausetup{mathmode, boxframe=normal, boxsize=2em}
  \begin{ytableau}
    \cline{2-5}
    & \none[\dots] & \none & \none & \none[\dots] & & r & \dots & 1\\
    \cline{2-5}
  \end{ytableau}
\end{align*}	
For all $d_c \geq 1$, we note\footnote{$\mathcal S_{d_c}$ can also be defined as the intersection of  the ideal generated by $\mathcal S_{eq}$ in $\fq[\Um,\Cm_I]$ with $\fq[\Um]_{d_c}$, thanks to Lemma~\ref{lemma:Palgebre} and Proposition~\ref{prop:basis of SM}.} $\mathcal{S}_{d_c} = \fq[\Um]_{d_c} \cap \idseq$, which is a graded submodule of $\fq[\Um]_{d_c}$, and we want to compute the Hilbert series of $\fq[\Um]_{d_c}/ \mathcal{S}_{d_c}$, which is the $\fq$-vector space generated by the standard monomials whose left-hand tableau is of the form:	
\begin{align*}
  \ytableausetup{mathmode, boxframe=normal, boxsize=2em}
  \begin{ytableau}
    \cline{2-4}
    r & \none & \none[\dots] & \none & 1 \\
    \cline{2-4}
    \none & \none & \none[\dots] & \none & \none \\
    \cline{2-4}
    r & \none & \none[\dots] & \none & 1 \\
    \cline{2-4}
    & \none & \none[\dots] & \none & \\
    \cline{2-4}
    \none & \none & \none & \none[\vdots] \\
    \cline{4-4}
    \none & \none & & \none[\dots] & \\
    \cline{4-5}
  \end{ytableau}
\end{align*}
We can enumerate these standard bitableaux by counting the lower part of them, each row being of length smaller than $r$ with coefficients in $\{r+1,\dots,r+m\}$. Moreover, if we fix the shape of a bitableau, we can enumerate the left part and the right part (whose coefficients are in $\{1, \dots, n\}$) independently and have, for a degree in $\Um$ fixed to $d_u$: 
\begin{align}
  \dim_\fq(\fq[\Um]_{d_c} / \mathcal{S}_{d_c})_{d_u} = \sum_{v \overset{r}{\rightsquigarrow} d_u} \stab(m, v(1), \dots, v(r)) \cdot \stab(n, v(1)+d_c, \dots, v(r)+d_c)\label{eq:HSstab}
\end{align}
With the  explicit formula for $\stab$  given in
Proposition~\ref{prop:stab}, we obtain the first explicit formula
for the Hilbert series of $\fq[\Um]_{d_c} / \mathcal{S}_{d_c}$:
\begin{theorem}\label{thm:HS1}
  For all $d_c \geq 1$, we have:
  \begin{align}
    \HS_{\fq[\Um]_{d_c} / \mathcal{S}_{d_c}}(t) = \sum_{d_u \geq 0} m_{d_u, d_c} t^{d_u}\label{eq:HS1}
  \end{align}
  where 
  \begin{align*}
    m_{d_u, d_c} = \sum_{v \overset{r}{\rightsquigarrow} d_u} \det
\left(    \begin{bmatrix}
      m-j\\
      v(i)+j-i
\end{bmatrix}
    \right)_{i,j}
    \det \left(
    \begin{bmatrix}
      n-j\\
      v(i)+d_c+j-i
    \end{bmatrix}
    \right)_{i,j}
  \end{align*}
  where the sum ranges over the tuples $v=(v(1),\dots,v(r))$ such that $\sum_{i=1}^r v(i)=d_u$, $d_u\ge v(1)\ge \dots \ge v(r)\ge 0$ and the indices of the matrices are $i,j\in\{1,\dots,r\}$.
\end{theorem}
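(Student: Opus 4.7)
The plan is to explicitly construct a $\fq$-basis of $\fq[\Um]_{d_c}/\mathcal{S}_{d_c}$ in terms of standard bitableaux and enumerate them by degree, so that the determinantal formula for $m_{d_u,d_c}$ emerges as a product of independent semistandard counts.

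First, I would combine Proposition~\ref{prop:basis of SM} with Lemma~\ref{lemma:Palgebre}. The former gives a basis of $\fq[\Um,\Cm]/\mathcal{S}$ as standard bitableaux $Y=\gamma_1\dots\gamma_t$ with $\gamma_1\notin\Seq$, while the latter realizes the Plücker subalgebra $\fq[\Um,\Cm_I]$ as having a basis of standard bitableaux in which each minor involves either all $r$ rows of $\Cm$ or none of them. Combining these two conditions yields a basis of $\fq[\Um]_{d_c}/\mathcal{S}_{d_c}$ whose elements consist of exactly $d_c$ ``pure-$\Cm$'' maximal minors $(r,r-1,\dots,1\mid b_{i,1},\dots,b_{i,r})$ on top---the count being $d_c$ because each such minor contributes degree $1$ in the $\Cm_I$'s---followed by ``pure-$\Um$'' minors of length at most $r$, where the length bound comes from the partial order $\gamma_i\le\gamma_{i+1}$ forcing weakly decreasing lengths together with the fact that the top block has length $r$.

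Next, I would fix $d_u$ and group these bitableaux by the conjugate shape $v=(v(1),\dots,v(r))$ of their bottom block, with $v(1)\ge\cdots\ge v(r)\ge 0$ and $\sum_i v(i)=d_u$. The left half splits into a forced top (the $d_c$ identical rows $(r,r-1,\dots,1)$) and a free bottom block consisting of a semistandard tableau of shape $v$ with entries in $\{r+1,\dots,r+m\}$; the junction between top and bottom is automatic because top entries are $\le r$ while bottom entries are $\ge r+1$. Shifting entries by $-r$ and applying Proposition~\ref{prop:stab} gives $\stab(m,v(1),\dots,v(r))$ such fillings. The right half is a single semistandard tableau of the complete shape $(v(1)+d_c,\dots,v(r)+d_c)$ with entries in $\{1,\dots,n\}$, contributing $\stab(n,v(1)+d_c,\dots,v(r)+d_c)$. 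Since a standard bitableau of a given shape is uniquely determined by its two halves, subject only to constraints that factor, the two choices are independent and the count is the product.

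Summing over all $v$ of length $r$ with $\sum v(i)=d_u$ recovers~\eqref{eq:HSstab}, and substituting the determinantal expression for $\stab$ from Proposition~\ref{prop:stab} into each factor yields~\eqref{eq:HS1}. The main subtlety lies in the first step: one has to verify cleanly that intersecting the quotient basis of $\fq[\Um,\Cm]/\mathcal{S}$ with the Plücker-subalgebra basis of $\fq[\Um,\Cm_I]$ produces a basis of the graded submodule $\fq[\Um]_{d_c}/\mathcal{S}_{d_c}$ with the described top/bottom--left/right factorization. Once this is in place, the remaining argument is a direct application of Proposition~\ref{prop:stab} and bookkeeping of shapes and index shifts.
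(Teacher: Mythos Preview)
Your proposal is correct and follows essentially the same approach as the paper: combine Proposition~\ref{prop:basis of SM} and Lemma~\ref{lemma:Palgebre} to identify a $\fq$-basis of $\fq[\Um]_{d_c}/\mathcal{S}_{d_c}$ as standard bitableaux with the forced top-left block and a free bottom block, then count by shape using the product $\stab(m,v)\cdot\stab(n,v+d_c)$ and substitute the determinantal formula from Proposition~\ref{prop:stab}. The paper presents exactly this argument in the paragraphs leading up to the theorem, culminating in~\eqref{eq:HSstab}; the subtlety you flag about intersecting the two bases is handled there implicitly via the footnote remarking that $\mathcal{S}_{d_c}$ equals the intersection of $\langle\Seq\rangle$ with $\fq[\Um]_{d_c}$.
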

\subsection{Factorization of the Hilbert series}
The explicit formula from theorem~\ref{thm:HS1} is not easy to compute, or to compare with existing formulae for other Hilbert Series (for instance for a regular system, or for the Minors system). We use combinatorial results to rewrite this series as a determinant of a matrix (theorem~\ref{thm:HSdelta}), and then show that this determinant can be factorized for a very simple formula (theorem~\ref{thm:HSA}).
\begin{theorem}\label{thm:HSdelta}
  For all $d_c \geq 1$, 
  \begin{align}\label{eq:HSdelta}
    \HS_{\fq[\Um]_{d_c} / \mathcal{S}_{d_c}}(t) = \det \left( \Delta_{d_c}(t) \right)
  \end{align}  
  where $\Delta_{d_c}(t) = \left( \sum\limits_{\ell \geq 0}
    \begin{bmatrix}
      m-i\\
      \ell
    \end{bmatrix} 
    \begin{bmatrix}
      n-j\\
      \ell+d_c+j-i
    \end{bmatrix} t^\ell \right)_{1 \leq i, j \leq r}$.
\end{theorem}
\begin{proof}
  We adapt and simplify the proof from~\citet[p.15]{G83} to our context. The proof is quite similar, but with the introduction of some $d_c$'s.
Let us define the $r\times (d_u+r)$ matrices
  \begin{align*}
    E_{d_u} &= \begin{pmatrix}
                            \begin{bmatrix}
                            	m-i \\
                            	\ell-(r-i)
                            \end{bmatrix} t^{\ell-(r-i)}
    \end{pmatrix}_{1 \leq i \leq r, \; 0 \leq \ell \leq d_u+r-1}                
    \\
    F_{d_u} &= \begin{pmatrix}
                             \begin{bmatrix}
                            	n-j \\
                            	\ell+d_c-(r-j)
                            \end{bmatrix}
                           \end{pmatrix}_{1 \leq j \leq r,\; 0 \leq \ell \leq d_u+r-1}
  \end{align*}
  From~\eqref{eq:stab}, for any $r$-uple $v$ with $0\le v(r)\le \dots\le v(1)\le d_u$, the maximal minor of $E_{d_u}$ and $F_{d_u}$ defined by the columns $v(1)+r-1, v(2)+r-2, \dots, v(r)$ are respectively $\stab(m, v(1), \dots, v(r)) t^{\sum_i v(i)}$ and $\stab(n, v(1)+d_c, \dots, v(r)+d_c)$. Moreover, we have a bijection between  subsets $\{i_1<\dots<i_r\}\subset \{0,\dots,d_u+r-1\}$  of size $r$ and $r$-uples $v$ with $0\le v(r)\le \dots\le v(1)\le d_u$ given by $v(j)+r-j=i_j$ for $j\in\{1,\dots,r\}$.
  By using the Cauchy-Binet formula for the determinant of a product of non-square matrices, we obtain:  
  \begin{align*}
   \det \left( E_{d_u} F_{d_u}^T \right) =
    \sum_{\substack{0 \leq v(r) \leq \dots \leq v(1) \leq d_u}} \stab(m, v(1), \dots, v(r)) \cdot \stab(n, v(1)+d_c, \dots, v(r)+d_c)\cdot t^{\sum_i v(i)}.
  \end{align*}
 Moreover, if we compute directly the matrix product, we get with a change of variable:
  \begin{align}
   \det \left( E_{d_u} F_{d_u}^T \right)  =&\det \left( \sum\limits_{\ell = 0}^{d_u+r-1} 
      \begin{bmatrix}
        m-i\\
        \ell-r+i
      \end{bmatrix}
      \begin{bmatrix}
        n-j\\
        \ell+d_c-r+j
      \end{bmatrix}
      t^{\ell-r+i} \right)_{1\le i,j\le r} \notag{}\\
=&\det \left( \sum\limits_{\ell = 0}^{d_u+i-1} 
      \begin{bmatrix}
        m-i\\
        \ell
      \end{bmatrix}
      \begin{bmatrix}
        n-j\\
        \ell+d_c-i+j
      \end{bmatrix}
      t^{\ell} \right)_{1\le i,j\le r}\label{eq:Dt1}.
  \end{align}
  According to Equation~\eqref{eq:HSstab},
  the coefficient of degree $d_u$ of the Hilbert series
  is the coefficient of degree $d_u$ in $  \det \left( E_{d_u} F_{d_u}^T \right)$, as the sum ranges over the shapes $(v(1),\dots,v(r))$ of degree $d_u$.
  Thanks to the previous equality, we can equivalently take the coefficient of degree $d_u$ in the determinant~\eqref{eq:Dt1}.

  As all entries of the matrix are polynomials in $t$ of degree  $\ge d_u$ (the entries are twisted coefficients that are always non zero), we can add terms of larger degree in the matrix without changing the value of the coefficient of degree $d_u$, and consider the determinant of a matrix of formal power series in $t$ that does not depend on $d_u$. This concludes the  proof of the theorem.
\end{proof}
We now use a  Saalschütz formula to factorize the coefficients of $\Delta_{d_c}(t)$.
\begin{lemma}[Saalschütz formula \cite{GS85}]
  For all $\ell, f$ non negative integers, and $a$ and $b$
  arbitrary numbers, we have
  \begin{align}
    \sum_{k\ge 0} \binom{b}{f-k}\binom{a}{\ell-k}\binom{a+b+k}{k} = \binom{a+f}{\ell}\binom{b+\ell}{f}.\label{eq:saalschütz}
  \end{align}
  Note that the equality remains true for $f< 0$ if we take $\binom{n}{k}=0$ for $k<0$ by convention.
\end{lemma}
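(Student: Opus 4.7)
The plan is to recognize the left-hand side as a terminating ${}_3F_2$ hypergeometric series evaluated at $1$ and to apply the classical Pfaff--Saalschütz theorem. Since both sides are polynomials in $a$ and $b$ of total degree at most $\ell+f$, it is enough to prove the identity when $a$ and $b$ are sufficiently large non-negative integers; the general case follows by polynomial density.

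First I would factor out the $k=0$ term $t_0=\binom{a}{\ell}\binom{b}{f}$ and compute the ratio of consecutive summands,
\[
\frac{t_{k+1}}{t_k}=\frac{(k-\ell)(k-f)(k+a+b+1)}{(k+1)(k+a-\ell+1)(k+b-f+1)},
\]
which rewrites the sum as
\[
\binom{a}{\ell}\binom{b}{f}\cdot {}_3F_2\!\left[\begin{array}{c} -\ell,\; -f,\; a+b+1 \\ a-\ell+1,\; b-f+1 \end{array};\; 1\right].
\]
Next I would check the Saalschütz balance condition: both $1+(-\ell)+(-f)+(a+b+1)$ and $(a-\ell+1)+(b-f+1)$ equal $a+b-\ell-f+2$. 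Since $-\ell$ is a non-positive integer the series terminates, so Pfaff--Saalschütz applies and yields the closed form
\[
\binom{a}{\ell}\binom{b}{f}\cdot \frac{(a+1)_f\,(-b-\ell)_f}{(a-\ell+1)_f\,(-b)_f}.
\]

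The final step is to expand every Pochhammer symbol into factorials via $(\alpha+1)_f=(\alpha+f)!/\alpha!$ and $(-\beta)_f=(-1)^f\,\beta!/(\beta-f)!$. The two factors of $(-1)^f$ cancel, and after combining with the binomial prefactor, all copies of $a!$, $b!$, $(a-\ell)!$ and $(b-f)!$ telescope away, leaving exactly $\binom{a+f}{\ell}\binom{b+\ell}{f}$. The main obstacle will be the sign and index bookkeeping around $(-b)_f$ and $(-b-\ell)_f$; I would safeguard against it by checking a small numerical case (e.g.\ $a=b=\ell=f=1$, where both sides equal $4$) and by verifying that the polynomial degrees in $a$ and $b$ of the reconstructed right-hand side match those of the original sum.
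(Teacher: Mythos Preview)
Your argument is correct: the term-ratio computation is right, the series is indeed a balanced terminating ${}_3F_2$, and the Pfaff--Saalsch\"utz evaluation followed by the Pochhammer-to-factorial simplification yields exactly $\binom{a+f}{\ell}\binom{b+\ell}{f}$. The numerical sanity check and the polynomial-extension remark are appropriate safeguards for the sign bookkeeping you flagged.

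There is, however, nothing to compare against in the paper itself: the authors do not prove this lemma. They state it with a citation to~\cite{GS85} and immediately use it in the proof of the next theorem. So your proposal supplies a proof where the paper only supplies a reference. It is worth noting that your route is somewhat circular in spirit---you invoke the Pfaff--Saalsch\"utz theorem to prove an identity that \emph{is} a form of Saalsch\"utz's theorem---whereas the cited paper of Gessel and Stanton gives a direct short proof of the identity from scratch. If you want a self-contained argument rather than a reduction to an equivalent classical statement, that reference is the place to look.
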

We deduce the following compact form for the Hilbert series.
\begin{theorem}
	For all $d_c \geq 1$, 
	\begin{align}\label{eq:HSB}
    \HS_{\fq[\Um]_{d_c} / \mathcal{S}_{d_c}}(t) = \frac{\det(B_{d_c}(t))}{(1-t)^{(m+n-r)r}}
  \end{align}
  where $B_{d_c}(t) = \left( \sum_{\ell \geq 0} \binom{n+d_c-i}{\ell+d_c+j-i} \binom{m-d_c-j}{\ell} t^\ell \right)_{1 \leq i, j \leq r}$
\end{theorem}
\begin{proof}
  First, we show that we can factorize $1/(1-t)^{m+n-i-j+1}$ in each coefficient $(i,j)$ of the matrix $\Delta_{d_c}(t)$.
  Let $G = (m-i) + (n-j) + 1$, we want to show that
  \begin{align*}
    \Delta_{d_c, i, j}(t) &= \frac{1}{\left(1-t\right)^{G}} \sum_{\ell \geq 0} \binom{n+d_c-i}{\ell+d_c+j-i} \binom{m-d_c-j}{\ell} t^\ell.
  \end{align*}
	Let us call $\tilde{\Delta}(t)$ the right hand part of the equality.
Expanding $1/(1-t)^G=\sum_{u\ge 0}\binom{G-1+u}{u}t^u$ in power series, and collecting the terms by powers of $t$, we get up to a relabeling of the summation indexes:
\begin{align*}
  \tilde{\Delta}(t) &= \sum_{\ell\ge 0}\left[\sum_{k\ge 0}  \binom{n+d_c-i}{\ell+d_c+j-i-k} \binom{m-d_c-j}{\ell-k} \binom{G-1+k}{k}\right] t^\ell
\end{align*}
As $G-1=m+n-i-j = (n+d_c-i)+(m-d_c-j)$, we can apply~\eqref{eq:saalschütz} to get the wanted equality:
\begin{align*}
  \tilde{\Delta}(t) &= \sum_{\ell\ge 0} \binom{m-d_c-j+\ell+d_c+j-i}{\ell}\binom{n+d_c-i+\ell}{\ell+d_c+j-i}t^\ell\\
  &= \sum_{\ell\ge 0}
    \begin{bmatrix}
      m-i\\
      \ell
    \end{bmatrix}
    \begin{bmatrix}
n-j\\\ell+d_c+j-i
    \end{bmatrix}
    t^\ell = \Delta_{d_c,i,j}(t).
\end{align*}
This proves that each coefficient in row $i$ and column $j$ of $\Delta_{d_c}(t)$ has $1/(1-t)^{m+n-i-j+1}$ in factor, for $1\le i,j\le r$. Then,
each  row $i$  has $1/(1-t)^{m-i+1}$ in factor, so that we can factorize the determinant by $1/(1-t)^{\sum_{i=1}^r m-i+1}$. After that, each coefficient in row $i$ and column $j$ has $1/(1-t)^{n-j}$ in factor, so that we can again factorize $1/(1-t)^{n-j}$ in each column $j$ to get
\begin{align*}
  \det(\Delta_{d_c}(t)) = \frac{1}{(1-t)^N}\det \left( \sum_{\ell \geq 0} \binom{n+d_c-i}{\ell+d_c+j-i} \binom{m-d_c-j}{\ell} t^\ell \right)_{1 \leq 1, j \leq r}
\end{align*}
with $N = \sum_{i=1}^r (m-i+1) + \sum_{j=1}^r (n-j) = r(m+n-r)$.
\end{proof}
Finally, we can apply the following lemma  to transform once again the expression of the Hilbert series.
\begin{lemma}[{\cite[p. 679]{CH94}}]
Let us consider the following $r \times r$ matrices of polynomials, for any $m,n,d_c\in\mathbb N$, $1\le r \le \min(m,n)$ and $i,j\in\{1,\dots,r\}$:
  \begin{align*}
    B_{d_c} &= \left( \sum\limits_{\ell \ge 0} \binom{m-d_c-j}{\ell} \binom{n+d_c-i}{\ell+d_c+j-i} t^\ell \right)_{1\le i,j\le r},\\
    B_{d_c}' &= \left( \frac{(-1)^{i+j}}{t^{j-1}} \sum\limits_{\ell \ge 0} \binom{m-d_c-j}{\ell} \binom{n+d_c-i}{\ell+d_c} t^\ell \right)_{1\le i,j\le r},
  \end{align*}
  \begin{align*}
    T = \left( (-1)^{j-i} \binom{j-1}{i-1} \frac{1}{t^{j-i}} \right)_{1\le i,j\le r}, \text{ and } \, T' = \left( (-1)^{i-j} \binom{i-1}{j-1} \right)_{1\le i,j\le r}.
  \end{align*}
  Then,  $B_{d_c}' = T' B_{d_c} T$. In particular, since $T'$ and $T$ are triangular matrices whose diagonal elements are $1$, it follows that $\det(B_{d_c}) = \det(B_{d_c}')$.
\end{lemma}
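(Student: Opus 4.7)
My plan is to separate the claim into two steps: first reduce to the matrix identity $H'=T'HT$, then deduce $\det(H)=\det(H')$. The matrix $T$ is upper triangular because $\binom{j-1}{i-1}=0$ whenever $j<i$, and $T'$ is lower triangular because $\binom{i-1}{j-1}=0$ whenever $i<j$. Both have diagonal entries equal to $1$, so $\det(T)=\det(T')=1$; once the matrix identity is established, multiplicativity of the determinant yields $\det(H')=\det(T')\det(H)\det(T)=\det(H)$ with no further work.

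To prove the matrix identity itself, I would compute the $(i,j)$ entry of $T'HT$ directly. Substituting the definitions and expanding $H_{c,d}$ as a power series in $t$, one obtains the sum
\[
(T'HT)_{i,j}=\sum_{c=1}^{i}\sum_{d=1}^{j}\sum_{\ell\ge 0}(-1)^{(i-c)+(j-d)}\binom{i-1}{c-1}\binom{j-1}{d-1}\binom{m-d_c-d}{\ell}\binom{n+d_c-c}{\ell+d_c+d-c}t^{\ell+d-j}.
\]
The key observation is that the alternating sums in $c$ and in $d$ act as finite-difference (binomial inversion) operators in two independent directions. The $c$-sum transforms $\binom{n+d_c-c}{\ell+d_c+d-c}$: after a shift of the lower index it becomes a Chu--Vandermonde instance and collapses into a single binomial indexed by $i$; the $d$-sum, combined with the $t^{d-j}$ weights coming from $T$, acts symmetrically and collapses the $d$-dependence into the factor $\binom{m-d_c-j}{\cdot}$. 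After these two collapses the remaining single sum over $\ell$ is exactly $t^{1-i}\sum_{\ell}\binom{m-d_c-j}{\ell}\binom{n+d_c-i}{\ell+d_c}t^{\ell}=H'_{i,j}$.

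The hypothesis $d_c\le m-r$ enters when the reductions are specialised to the boundary entries $j=r$ (respectively $i=r$): it guarantees that the binomial coefficients of the form $\binom{m-d_c-j}{\ell}$ have non-negative top argument, so the convention $\binom{n}{k}=0$ for $n<k$ correctly zeroes out the contributions that lie outside the natural support; without this bound the collapse of the alternating sums would require an analytic-continuation argument on negative upper arguments and the clean telescoping would fail.

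The main obstacle is the explicit algebraic verification of this double binomial inversion. The cleanest path is to appeal to the identity of Conca and Herzog~\cite{CH94}, which is precisely the factorization in question and is already cited in the statement. Should an independent derivation be desired, it can be obtained by applying the Saalsch\"utz-type identity~\eqref{eq:saalschütz} recalled earlier in the section twice in succession---once to dispose of the $c$-sum and once to dispose of the $d$-sum---mirroring the strategy that yielded the factorization in the proof of~\eqref{eq:HS2bis}.
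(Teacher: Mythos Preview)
The paper does not supply its own proof of this lemma: it is quoted verbatim from Conca--Herzog~\cite{CH94} and used as a black box to pass from the matrix $B_{d_c}$ to the matrix $A_{d_c}$. Your proposal is therefore in line with the paper's treatment---you correctly isolate the trivial half (that $T$ is upper triangular and $T'$ lower triangular with unit diagonals, hence $\det T=\det T'=1$) and, for the substantive half $H'=T'HT$, you ultimately defer to~\cite{CH94} exactly as the paper does.

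Your supplementary sketch of an independent verification is a reasonable outline, but two of the side remarks are off. First, the inner $c$- and $d$-sums are alternating Vandermonde-type convolutions (finite differences of a binomial in its upper argument), not Saalsch\"utz instances; the identity~\eqref{eq:saalschütz} was used earlier for a different purpose and is not the right tool here. Second, your account of where the hypothesis $d_c\le m-r$ enters is slightly misplaced: the matrix identity $H'=T'HT$ is a purely formal identity in Laurent series in $t$ and holds without that bound (the Vandermonde collapse is polynomial in the upper arguments). The bound is needed so that every upper argument $m-d_c-j$ with $1\le j\le r$ is nonnegative, which makes each entry of $H$ and $H'$ an honest polynomial in $t$ rather than an infinite power series; this is what the subsequent degree computation in the proof of the bound on $d_{\mathrm{reg}}$ relies on.
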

\begin{proof}
  We give the proof for the sake of completeness. The result comes
  from the use twice of the following formula, that is given
  in~\cite{CH94} without proof.
  \begin{align}
    \sum_{k\ge 1} (-1)^{i-k}\binom{i-1}{k-1}\binom{a-k}{b-k} = (-1)^{i-1}\binom{a-i}{b-1} \text{ with } i\ge 1, a \in \mathbb C, b \ge 1.\label{eq:combi}
  \end{align}
 We did not find another reference for this formula, so that we give an elementary proof of it in~\ref{sec:appendix}. Now using this formula with
 \begin{enumerate}
 \item $a=n+d_c, b=\ell+d_c+j$, as $i\le r$\label{item:i}
 \item \label{item:ii} $a=m-d_c, \; b=\ell+1, \text{ as } j\le r.$
 \end{enumerate}
we get
  \begin{align*}
    (T'B_{d_c})_{i,j}  &= \sum_{\ell\ge 0} \sum_{k=1}^r (-1)^{i-k}\binom{i-1}{k-1}\binom{m-d_c-j}{\ell}\binom{n+d_c-k}{\ell+d_c+j-k}t^\ell\\
                 &\overset{\eqref{item:i}}{=}\sum_{\ell\ge 0}\binom{m-d_c-j}{\ell}(-1)^{i-1}\binom{n+d_c-i}{\ell+d_c+j-1}t^\ell\\
    (T'B_{d_c}T)_{i,j} &= \sum_{\ell\ge 0}\sum_{k=1}^r\binom{m-d_c-k}{\ell}(-1)^{i-1}\binom{n+d_c-i}{\ell+d_c+k-1}(-1)^{j-k}\binom{j-1}{k-1}\frac{t^\ell}{t^{j-k}} \\
                 &=\frac{(-1)^{i+j}}{t^{j-1}} \sum_{\ell\ge 0}\sum_{k=1}^r(-1)^{k-1} \binom{m-d_c-k}{\ell-k+1}\binom{n+d_c-i}{d_c+\ell}\binom{j-1}{k-1}t^{\ell}\\
    &\overset{\eqref{item:ii}}{=} \frac{(-1)^{i+j}}{t^{j-1}}\sum_{\ell\ge 0} \binom{n+d_c-i}{d_c+\ell}\binom{m-d_c-j}{\ell}t^{\ell}=(B'_{d_c})_{i,j}.
  \end{align*}
\end{proof}
We deduce the following compact form for the Hilbert series.
\begin{theorem}\label{thm:HSA}
  For any $d_c\ge 1$, we have:
  \begin{align}\label{eq:HSA}
    \HS_{\fq[\Um]_{d_c} / \mathcal{S}_{d_c}}(t) = \frac{\det(A_{d_c}(t))}{t^{\binom{r}{2}} (1-t)^{(m+n-r)r}}
  \end{align}
  where $A_{d_c}(t) = \left( \sum\limits_{\ell \geq 0} \binom{n+d_c-i}{\ell+d_c} \binom{m-d_c-j}{\ell} t^\ell \right)_{1 \leq i, j \leq r}$.
\end{theorem}
\begin{proof}
  The coefficients of $A_{d_c}$ are the same as the ones for $B_{d_c}'$ up to a factor $\frac{(-1)^{i+j}}{t^{j-1}}$ for entry $(i,j)$. We can factorize $(-1)^i$ on each row $i$, and independently $\frac{(-1)^j}{t^{j-1}}$ on each column $j$. Then
  \begin{align*}
    \det(B_{d_c})=    \det(B_{d_c}') =(-1)^{\sum_{i=1}^r i + \sum_{j=1}^r j}\prod_{j=1}^r\frac1{t^{j-1}}\det(A_{d_c}) = \frac1{t^{\binom{r}{2}}}\det(A_{d_c}).
  \end{align*}
\end{proof}

\section{Support-Minors Hilbert series in the generic case (\texorpdfstring{$K\ge m(n-r)$}{K>=m(n-r)})}
\label{sec:SMHSgeneral}
Following~\citet{FSS13}, we transfer properties for determinantal
$\SM$ ideals to ideals corresponding to the $\SM$
modeling.

Let $\mathcal D$ be the ideal of $\fq[\Um]$ generated by all the $(r+1) \times (r+1)$ minors of $\Um$ (the classical generic minors ideal). We denote by $\mathcal S$ the ideal of $\fq[\Um, \Cm_I]$ generated by $\mathcal D$ and $\mathcal S_{eq}$.
We start by proving in section~\ref{sec:cohenmacaulay} that $B[\Um, \Cm_I] / \mathcal S$ is Cohen-Macaulay for any Cohen-Macaulay ring $B$, and deduce its Krull dimension. Note that we need here to include $\mathcal D$ in $\mathcal S$, otherwise we do not get a Cohen-Macaulay ring.

Then with almost the same proof as~\citet{FSS13}, adapted to our modules, we show in section~\ref{sec:mngeneric} that for $K\ge m(n-r)$, we can add to $\mathcal S_{d_c}$  $mn$ generic\footnote{i.e. sums of all the possible monomials of
  degree $D$ (with a weight $D$ on the variables $\Um$) where the
  coefficients are new variables.} forms
  $g_{i,j}$ of degree $D$ that are non-zero divisor in $\mathcal S_{d_c}$, and compute the Hilbert series of the resulting module.
  We show in section~\ref{sec:uij}
  that the Hilbert series is the same if we add polynomials
  $u_{i,j}-f_{i,j}$ with $f_{i,j}$ generic forms of degree $D$ in the
  $\xv$ variables instead of the $g_{i,j}$'s. This is done by showing that, in both cases (the
  $g_{i,j}$'s or the $u_{i,j}-f_{i,j}$'s), there exists non-empty
  Zariski open sets on which the specialized ideal has the same
  Hilbert series as the generic one.
  
\subsection{Some properties of \texorpdfstring{$\mathcal S$}{S}} 
\label{sec:cohenmacaulay}
We show that adding $\mathcal D$ to $\mathcal S_{eq}$ does not change the modules $\mathcal S_{d_c}$ for $d_c\ge 1$, and that $\mathcal S_{d_c}$, that is defined w.r.t. $\mathcal S_{eq}$ and not $\mathcal S$, is indeed the homogeneous component of $\mathcal S$ of degree $d_c$.
\begin{lemma}\label{lemma:DHinS}
	Let $d_c \geq 1$, then $\mathcal D \fq[\Um]_{d_c} \subset \mathcal S_{d_c}$, and $\mathcal S\cap \fq[\Um]_{d_c}=\mathcal S_{d_c}$.
\end{lemma}

\begin{proof}
	Let $\mathcal D_{\Cm,\Um}$ be the ideal generated by all the $(r+1) \times (r+1)$ minors of
	$\binom{\Cm}{\Um}$. This ideal is generated by all the standard monomials in $\Cm, \Um$ of length greater than or equal to $r+1$ \cite[Corollary 3.4.2 p.83]{BCRV22}. It follows from Lemma~\ref{lemma:Palgebre} and Proposition~\ref{prop:basis of SM} that $\mathcal D_{\Cm,\Um} \cap \fq[\Um]_{d_c} = \mathcal S_{d_c}$. Moreover, as $\mathcal D\subset\mathcal D_{\Cm,\Um}$, for any maximal minor $\Cm_T$   of $\Cm$, we have $\mathcal D \mathcal \Cm_T \subset \mathcal D_{\Cm,\Um} \cap \fq[\Um]_{1}$, hence $\mathcal D\fq[\Um]_{d_c}\subset\mathcal S_{d_c}$. We conclude the proof by using the fact that $\mathcal S = \langle\mathcal S_{eq}\rangle_{\fq[\Um,\Cm_I]} + \mathcal D \fq[\Um,\Cm_I]$. 
If we consider the graded component of degree $d_c\ge 1$, as the ideals are homogeneous, we have $\mathcal S\cap\fq[\Um]_{d_c} = (\langle\mathcal S_{eq}\rangle_{\fq[\Um,\Cm_I]})_{d_c} + (\mathcal D \fq[\Um,\Cm_I])_{d_c} = \mathcal S_{d_c} + \mathcal D \fq[\Um]_{d_c} = \mathcal S_{d_c}.$
\end{proof}
This is interesting because we aim to view the quotient by $\mathcal S$ as an algebra with straightening law (ASL) 
in order to show it is a Cohen-Macaulay ring. The ideal generated by $\Seq$ alone is not Cohen-Macaulay, unlike the ideal generated by both $\Seq$ and $\mathcal D$, as explained below.

\begin{corollary}
  Let $\mathcal J$ be the Minors ideal, i.e. the ideal of $\fq[\xv]$ generated by the $(r+1)\times (r+1)$ minors of $\Fm$. Let $C_T$ be any maximal minor of $\Cm$, and  $\idealSM$ be the Support Minors ideal in $\fq[\xv][\Cm_I]$. Then 
  \begin{align*}
    \mathcal{J} C_T \subset \idealSM
  \end{align*}
  where $\mathcal JC_T$ is the set of all elements in $\mathcal J$ multiplied by the minor $C_T$. 
\end{corollary}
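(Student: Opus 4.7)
The strategy is to deduce the Corollary from Lemma~\ref{lemma:4.1} applied at $\ell=0$, through the $\fq$-algebra specialization that substitutes $\Fm$ for a generic matrix of variables $\Um$. Working first in the generic setting with $\Um$ in place of $\Fm$, let $\omega(\Um)$ be any $(r+1)\times(r+1)$ minor of $\Um$. Then $\omega(\Um)$ is also an $(r+1)\times(r+1)$ minor of $\binom{\Cm}{\Um}$---namely the one that uses no row from $\Cm$ and $r+1$ rows from $\Um$---so it belongs to $\mathcal D_r$. For any maximal minor $C_T$ of $\Cm$, the product $\omega(\Um)\cdot C_T$ thus lies in $\mathcal D_r$, and since it has degree one in the Plücker coordinates, it also lies in $\mathcal H_1$.

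Lemma~\ref{lemma:4.1} at $\ell=0$, $d_c=1$ provides the injection $\mathcal H_1/\mathcal S_1 \hookrightarrow \fq(\mathfrak{b},\mathfrak{e})[\Um,\Xm,\Cm]/\mathcal D_r$, which is equivalent to the equality $\mathcal D_r \cap \mathcal H_1 = \mathcal S_1$. Hence $\omega(\Um)\cdot C_T \in \mathcal S_1$, so there is a decomposition
\[ \omega(\Um)\cdot C_T \;=\; \sum_{e} g_e\,\mathrm{SM}_e(\Um,\Cm_I) \]
as a linear combination of the generic Support Minors generators (each of bi-degree $(1,1)$ in $(\Um,\Cm_I)$). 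Since the left-hand side and the $\mathrm{SM}_e$'s do not involve the auxiliary variables $\mathfrak{b},\mathfrak{e},\Xm$, specializing these variables to convenient values yields the same relation with coefficients $g_e\in\fq[\Um]$.

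To finish, apply the $\fq$-algebra morphism $\fq[\Um][\Cm_I]\to\fq[\Xm][\Cm_I]$ that sends $u_{i,j}\mapsto f_{i,j}$ and acts as the identity on the Plücker coordinates. This maps $\omega(\Um)\cdot C_T$ to $\omega(\Fm)\cdot C_T$, and each generic generator $\mathrm{SM}_e(\Um,\Cm_I)$ to the corresponding Support Minors generator $\mathrm{SM}_e(\Fm,\Cm_I)\in\mathcal I$. Therefore $\omega(\Fm)\cdot C_T\in\mathcal I$, and because the $(r+1)\times(r+1)$ minors of $\Fm$ generate $\mathcal J$, the inclusion $\mathcal J\cdot C_T\subset\mathcal I$ follows. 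The argument is essentially formal once Lemma~\ref{lemma:4.1} is in hand; the only delicate point is the book-keeping of the $\fq(\mathfrak{b},\mathfrak{e})$ extension inherited from that lemma, which is harmless since the identity being specialized already has $\fq$-rational coefficients.
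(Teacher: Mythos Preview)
Your proof is correct and follows essentially the same line as the paper's: both invoke Lemma~\ref{lemma:4.1} at $\ell=0$ to obtain $\mathcal D_r\cap\mathcal H_{d_c}\subset\mathcal S_{d_c}$ (hence $\omega(\Um)C_T\in\mathcal S_1$), and then specialize $\Um\mapsto\Fm$. The only cosmetic difference is that the paper phrases the specialization as ``add $\langle u_{i,j}-f_{i,j}\rangle$ and intersect with $\fq[\Xm,\Cm_I]$'' whereas you apply the substitution homomorphism directly; these are equivalent, and your handling of the $\fq(\mathfrak b,\mathfrak e)$ extension is fine (indeed for $\ell=0$ the lemma's proof never uses those variables).
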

\begin{proof}
  By elimination of the $u_{i,j}$ variables (that do not appear in $\mathcal D \Cm_T$ nor in the $f_{i,j}$'s), we have
  \begin{align*}
\mathcal{J}C_T = \langle \mathcal{D}C_T  + \langle u_{i, j} - f_{i, j} \rangle \rangle_{\fq[\xv,\Um,\Cm_I]} \cap \fq[\xv,\Cm_I].
  \end{align*}
  In the other hand, $\idealSM$ is the ideal generated by the Support Minors equations for $\Fm$, so that 
  \begin{align*}
    \idealSM = \left(\langle \Seq\rangle_{\fq[\xv,\Um,\Cm_I]} + \langle  u_{i, j} - f_{i, j} \rangle_{\fq[\xv,\Um,\Cm_I]}\right) \cap \fq[\xv,\Cm_I].
  \end{align*}
  According to lemma~\ref{lemma:DHinS}, for any fixed minor $C_T$, we have
  \begin{align*}
    \langle \mathcal D \Cm_T \rangle_{\fq[\Um,\Cm_I]}  \subset \oplus_{d_c\ge 1}  \mathcal D {\fq[\Um]_{d_c}}
\subset \oplus_{d_c\ge 1} \mathcal S_{d_c} = \langle \mathcal S_{eq}\rangle_{\fq[\Um,\Cm_I]}
  \end{align*}
  Adding the $\xv$ variables and the polynomials $u_{i,j}-f_{i,j}$ we get
  \begin{align*}
    \langle \mathcal D \Cm_T \rangle_{\fq[\xv,\Um,\Cm_I]} +  \langle u_{i, j} - f_{i, j} \rangle \rangle_{\fq[\xv,\Um,\Cm_I]}
    \subset
     \langle \mathcal S_{eq}\rangle_{\fq[\xv, \Um,\Cm_I]} +  \langle u_{i, j} - f_{i, j} \rangle \rangle_{\fq[\xv,\Um,\Cm_I]}
  \end{align*}
  By eliminating the $\Um$ variables, we get $\mathcal J \Cm_T\subset \idealSM$.
\end{proof}
\subsection{Cohen-Macaulay properties}
We use known results about algebras with straightening law to show that  $\fq[\Um, \Cm_I] / \mathcal S$ is a Cohen-Macaulay ring  and compute its dimension.
 \begin{definition}[{\cite[p.58]{BV06}}]
  Let $\Pi$ be a partially ordered set.
  \begin{enumerate} 
  \item  An element $\beta \in \Pi$ is a \emph{cover} of $\alpha \in \Pi$ if $\beta > \alpha$ and if there is no element $\gamma \in \Pi$ such that $\beta > \gamma > \alpha$.
  \item $\Pi$ is said to be \emph{wonderful} if the following holds after a smallest and a greatest element $-\infty$ and $\infty$ have been added to $\Pi$ : if $\alpha \in \Pi \cup \{-\infty\}, \gamma \in \Pi \cup \{\infty\}$, and $\beta_1, \beta_2 \in \Pi$ are covers of $\alpha$ satisfying $\beta_1, \beta_2 < \gamma$, then there exists an element $\beta \in \Pi \cup \{\infty\}$, with $\beta \leq \gamma$ which covers both $\beta_1$ and $\beta_2$.
  \end{enumerate}
\end{definition}
Graphically, it can be pictured as in figure~\ref{fig:wonderful}.
\begin{figure}[h]
  \centering
  \begin{tikzpicture}
    \node (g) at (0,-1) {$\gamma$};
    \node (b) at (0,0) {$\beta$};
    \draw [dashed] (g) -- (b);
    \node (b1) at (-1,0.5) {$\beta_1$};
    \draw [dashed] (g) -- (b1);
    \node (b2) at (1,0.5) {$\beta_2$};
    \draw [dashed] (g) -- (b2);
    \node (a) at (0,1) {$\alpha$};
    \draw (b) -- (b1) -- (a) -- (b2) -- (b);
    \node (small) at (2,1) {};
    \node (large) at (2,-1) {};
    \draw[-stealth] (small)--(large);
  \end{tikzpicture}
  \caption{Graphical image of a wonderful set. A line means that the top element is smaller than the  one below. A solid line means that the  element below is a cover for the one above.}
  \label{fig:wonderful}
\end{figure}

\begin{theorem}{{\cite[theorem 5.14]{BV06}}}\label{thm:wonderful}
  Any graded ASL on a wonderful set $\Pi$ over a Cohen-Macaulay ring $B$ is also Cohen-Macaulay.
\end{theorem}
 Let us define $\piS = \piC \cup \piU$ where
 \begin{align*}
   \piC &= \left\{ (r, \dots, 1 | j_1, \dots, j_r) : 1 \leq j_1 < \dots < j_r \leq n \right\},\\
   \piU &= \left\{ (i_p, \dots, i_1 | j_1, \dots, j_p) : 1 \leq p \leq r, \substack{r+1 \leq i_1 < \dots < i_p \leq m+r, 1 \leq j_1 < \dots < j_r \leq n }\right\}.
 \end{align*}
For any $B$ a Cohen-Macaulay ring,  $B[\Um, \Cm_I] / \mathcal S$ is an ASL on $\piS$, from the Straightening Law~\ref{thm:straightening}, Proposition~\ref{prop:basis of SM} and Lemma~\ref{lemma:Palgebre}. The minors corresponding to $\piC$ are maximal minors of $\Cm$, whereas those corresponding to $\piU$ are minors just in $\Um$ of size at most $r$.
Remark that the fact that we restrict to $p\le r$ in $\piU$ comes from the addition of $\mathcal D$ to $\mathcal S$. Without $\mathcal D$, $\piU$ would have been defined for any $p\in\{1,\dots,\min(m,n)\}$. We draw an example of Hasse diagram for $m=2, n=3, r=2$ in figure~\ref{fig:hasse}.
\begin{lemma}\label{lemma:minmax}
For any minors $\alpha$ and $\beta$  such that $\alpha\le\beta$, $\alpha\in\piU$ implies $\beta\in\piU$. Moreover, $\alpha_{\min}=(r,\dots,1|1,\dots,r)\in\piC$ is smaller than any element in $\piS$, and $\alpha_{\max}=(m+r|n)\in\piU$ is larger than any element in $\piS$.
\end{lemma}
\begin{lemma}\label{lemma:cover}
  Let  $\ev_i$ denote the canonical vector of zeros with only one 1 in position $i$, and $\zerom$ the zero vector.
We denote by $E_R$ the set of bivectors of the form $(\zerom | \ev_i)$ for $i\in\{1,\dots,n\}$, by $E_L$ the sets of bivectors of the form $(\ev_j | \zerom)$ for $j\in\{1,\dots,m+r\}$ and $E = E_L \cup E_R$.
  Let $\alpha=(a_{p},\dots,a_1|a'_1,\dots,a'_p)\in\piS, \beta=(b_s,\dots,b_1|b'_1,\dots,b'_s)\in\piS$ such that $\beta$ is a cover for $\alpha$ in $\piS$, then
  \begin{enumerate}
  \item if both $\alpha,\beta\in\piU$,
    \begin{itemize}
    \item if $p=s$ there exists $\epsilon\in E$  such that  $\beta=\alpha+\epsilon$.
    \item if $p>s$, then $a_p=m+r$, $a'_p=n$ and $\beta=(a_{p-1},\dots,a_1|a'_1,\dots,a'_{p-1})$.
    \end{itemize}
    \item if both $\alpha,\beta\in\piC$,  there exists $\epsilon\in E_R$  such that  $\beta=\alpha+\epsilon$.
    \item if $\alpha\in\piC$, $\beta\in\piU$, then $s=p=r$, $(b_r,\dots,b_1)=(2r,\dots,r+1)$ and $(b'_1,\dots,b'_r)=(a'_1,\dots,a'_r)$.
  \end{enumerate}
\end{lemma}
\begin{proof}
  If $\alpha<\beta$ then $p\ge s$ and $a_i\le b_i$, $a'_i\le b'_i$ for all $i\in\{1,\dots,s\}$, and at least one of the inequalities is strict.
  \begin{enumerate}
  \item if both $\alpha,\beta\in\piU$,
    \begin{itemize}
    \item  if $p=s$, then some $a_i<b_i$, hence all $a_j=b_j$ for $j\ne i$ and $b_i=a_i+1$ (otherwise we can find $\gamma=a+(e_i|0)\in\piU$ such that $\alpha<\gamma<\beta$) and $\beta=\alpha+(e_i|0)$. If  $a'_i<b'_i$, then $\beta=\alpha+(0|e_i)$. 
    \item if $p>s$, then $\gamma=(m+r,a_{p-1},\dots,a_1|a'_1,\dots,a'_{p})\in\piU$ is such that $\alpha\le \gamma < \beta$ which implies $\gamma=\alpha$ by definition of a cover. It implies that $a_p=m+r$ and $a'_p=n$ in the same way. Let $\gamma=(a_{p-1},\dots,a_1|a'_1,\dots,a'_{p-1})\in\piU$, then $\alpha<\gamma\le \beta$ so that $\gamma=\beta$.
    \end{itemize}
  \item it is similar to the first point works, looking only at the right part.
  \item not all minors between $\alpha$ and $\beta$ are in $\piS$. Here, $\gamma=(2r,\dots,r+1|a'_1,\dots,a'_r)\in\piU$ satisfies $\alpha<\gamma\le \beta$ so that $\gamma=\beta$.
  \end{enumerate}
\end{proof}
\begin{proposition}
	$\piS$ is a wonderful set.
\end{proposition}
\begin{proof}
Let $\alpha \in \piS \cup \{-\infty\}$, $\beta_1, \beta_2 \in \piS$ covers of $\alpha$ and $\gamma \in \piS \cup \{\infty\}, \gamma > \beta_1, \beta_2$. We enumerate all possible cases for $\alpha, \beta_1, \beta_2$. Remark that if $\beta_1=\beta_2$, then there exists a cover $\beta$ of $\beta_1$ such that $\beta_1<\beta\le \gamma$.
\begin{enumerate}
\item if $\alpha \in \piU$, then $\gamma, \beta_1, \beta_2 \in \piU$ by lemma~\ref{lemma:minmax}. Let $p$ be the length of $\alpha$. 
  \begin{enumerate}
  \item if $\beta_1$ and $\beta_2$ have length $p$, by lemma~\ref{lemma:cover} there exists $\epsilon_1$ and $\epsilon_2$ in $E$ such that $\beta_1 = \alpha + \epsilon_1$ and $\beta_2 = \alpha + \epsilon_2$. For $\beta_1\ne \beta_2$ then $\epsilon_1 \neq \epsilon_2$ and we take $\beta = \alpha + \epsilon_1 + \epsilon_2\le \gamma$.
  \item if some $\beta_i$ (let's say $\beta_2$) has length $p-1$, then the length of $\gamma$ is $\le p-1$. From lemma~\ref{lemma:cover}, $\beta_2$ is
    obtained by removing the largest coefficients $a_p=m+r$ and $a'_p=n$
    from $\alpha$. If $\beta_1\ne \beta_2$, then $\beta_1$ has length
    $p$ and there exists $\epsilon \in E$ such that
    $\beta_1 = \alpha + \epsilon$. Then, $\beta=\beta_2+\epsilon$ is
    obtained from $\beta_1$ by removing the largest coefficients $m+r$
    and $n$, hence $\beta$ is a cover for both $\beta_i$'s and we have
    $\beta\le \gamma$.
  \end{enumerate}
\item if $\alpha \in \piC$ and $\beta_1, \beta_2 \in \piC$, then by lemma~\ref{lemma:cover} there exists $\epsilon_1$ and $\epsilon_2$ in $E_R$ such that $\beta_1 = \alpha + \epsilon_1$ and $\beta_2 = \alpha + \epsilon_2$. If $\beta_1\ne\beta_2$ then $\epsilon_1 \neq \epsilon_2$ and  $\beta = \alpha + \epsilon_1 + \epsilon_2\le \gamma$ is a cover in $\piC$ for both $\beta_i$'s.
\item if $\alpha=(r,\dots,1|\,\alpha^{(r)}) \in \piC$ and  $\beta_1=(\beta_1^{(l)}|\,\beta_1^{(r)}) \in \piU$, then $\beta_{1}^{(l)} = (2r, \dots, r+1)$ and $\beta_{1}^{(r)} = \alpha^{(r)}$. If $\beta_2 = (\beta_{2}^{(l)} | \, \beta_{2}^{(r)})\ne \beta_1$, then $\beta_2$ belongs to $\piC$ and is such that $\beta_{2} = \alpha + \epsilon$, for some $\epsilon$ in $E_R$. 
  Then  $\beta$ such that $\beta^{(l)} = (2r, \dots, r+1)$ and $\beta^{(r)} = \beta_2^{(r)}$ satisfies $\beta \leq \gamma$ and $\beta\in\piU$ is a cover of both $\beta_1$ and $\beta_2$.
\item $\alpha = -\infty$. Then $\beta_1=\beta_2=(r, \dots, 1 | 1, \dots, r)$ by lemma~\ref{lemma:minmax}.
\end{enumerate}
\end{proof}
As a corollary, as $\fq$ and $\fq[\xv]$ are both Cohen-Macaulay rings, we get
\begin{theorem}
  Let $\fq$ be a field. Then $\fq[\Um, \Cm_I] / \mathcal S$ and $\fq[\xv, \Um, \Cm_I] / \mathcal S$ are both Cohen-Macaulay ring.
\end{theorem}
We now compute its dimension. We refer to \cite[Chap. 9]{E95} for the definition of (Krull) dimension and codimension (or height) of an ideal.
\begin{lemma}\label{lemma:krulldimension}
  The Krull dimension of $\fq[\Um, \Cm_I] / \mathcal S$ is $r(m+n-r) + 1$.
\end{lemma}
\begin{remark}
  The Krull dimension of the Minors ideal is
  $\dim(\fq[\Um]/\mathcal D)=r(m+n-r)$ from \cite[Theorem
  3.4.6]{BCRV22}. Indeed, the variety associated to $\mathcal D$ is
  the one associated to $\mathcal S$ projected onto the $\Um$
  variables. Moreover, for any solution $\Um^*$, there is a unique projective point in $\mathbb P(\fq^{\binom{n}{r}})$ that defines the vector space generated by the rows of $\Um^*$.
\end{remark}
\begin{proof}
  From \cite[Proposition 5.10 p. 55]{BV06}, the Krull dimension of $\fq[\Um, \Cm_I] / \mathcal S$ is given by the maximal length of a strictly decreasing chain of elements of $\piS$ 
  \begin{align*}
    h_1 > \dots  > h_p
  \end{align*}	
  where $h_i \in \piS$ for all $1 \leq i \leq p$. For  $\piS$, lemmas~\ref{lemma:minmax},~\ref{lemma:cover} tell us that  such a chain of covers must respect the following rules:
  \begin{enumerate}
  \item it begins with $h_p=(r, \dots, 1 | 1, \dots, r)$, which is the smallest one.
  \item it ends with $h_1=(m+r | n)$, which is bigger than every minor in $\piS$.
  \item for two consecutive $h_i>h_{i+1}$, we are in one of the following cases:
    \begin{enumerate}
    \item if $h_{i+1}= (m+r,h^{(l)}|h^{(r)}, n)$ then $h_i=(h^{(l)}|h^{(r)})$,
    \item $h_{i+1}+\epsilon=h_i$ with $\epsilon\in E_R$,
    \item $h_{i+1}+\epsilon=h_i$ with $\epsilon\in E_L$,
    \item if $h_{i+1}=(2r, \dots, r+1|h^{(r)})\in\piU$ then $h_i=(r,\dots,1|h^{(r)})\in\piC$.
    \end{enumerate}
  \end{enumerate}
    Only case (a) decreases the length, so that this operation must be
    performed exactly $r$ times to go from $h_p$ to $h_1$. Case (d)
    will be applied exactly once, to switch from monomials
    $h_{i+1}\in\piU$ to $h_i\in\piC$.  Moreover, between two minors of
    the same length, we can just decrease one coefficient by one. For the coefficient corresponding to the $d$-th row ($d\in\{1,\dots,r\}$), we need to go from $m+r$ to $r+d$, so we need $m-d$ steps. For the coefficient corresponding to the $d$-th column, we need to go from $n$ to $d$, so we need $n-d$ steps.
  Any such  chain has necessarily length
\begin{align*}
 r + \sum\limits_{d=0}^{r-1} (n-d) + \sum\limits_{d=0}^{r-1} (m-d) + 1 = r(m+n-r) + 1.
\end{align*}
	Therefore, $\dim(\fq[\Um, \Cm_I] / \mathcal S) = r(m+n-r) + 1$.
\end{proof}

It follows from this lemma and the dimension $\dim(\fq[\Um,\Cm_I])=mn+r(n-r)+1$ (from~\eqref{eq:dimplucker}) the following formula :
\begin{corollary}
  The height of $\mathcal S$ is        
  \[ \text{height }(\mathcal S) =  m(n-r). \]   
\end{corollary}
\begin{proof}
  According to \cite[p.226]{E95}, for a finitely generated domain $R$
  over a field, and $I\subset R$ an ideal, we have
  $\text{codim}(I) = \text{height}(I)=\dim(R)-\dim(I)$. As the Plücker
  algebra is factorial (hence a domain), see~\cite[Thm
  6.2.9]{BCRV22}, so is $R=\fq[\Um,\Cm_I]$. We get 
  $\text{height}(\mathcal S)=\dim(\fq[\Um,\Cm_I])-\dim(\fq[\Um, \Cm_I]
  / \mathcal S)=mn+r(n-r)+1-r(m+n-r)-1 = m(n-r)$.
\end{proof}
\subsection{Adding \texorpdfstring{$mn$}{mn} generic linear forms  to \texorpdfstring{$\mathcal S_{d_c}$}{Sdc}}
\label{sec:mngeneric}
We have computed in the previous section the Hilbert series of the module ${\fq[\Um]_{d_c}/\mathcal S_{d_c}}$. In this section, following the proof from \citet{FSS13} with some adjustment, we compute the Hilbert series of the module where we add $mn$ generic forms of degree $D$ in the variables $\Um$ and new variables $\xv$. For that, we put a weight $D$ on each variable $u_{i,j}$ and consider the weighted  Hilbert series $w\HS_{\fq[\Um]_{d_c}/\mathcal S_{d_c}}(t)=\HS_{\fq[\Um]_{d_c}/\mathcal S_{d_c}}(t^D)$.

Consider new variables $\mathfrak{b} = \{ \mathfrak{b}_t^\ell : t \in Monomials(\fq[X], D), 1 \leq l \leq mn \}$ and $\mathfrak{e} = \{ \mathfrak{e}_{i, j}^\ell : 1 \leq i \leq m, 1 \leq j \leq n, 1 \leq l \leq mn \}$. Denote by $\mathcal H_{d_c}=\fq(\mathfrak{b,e})[\Um,\xv]_{d_c}$, and define, for $\ell\in\{1,\dots,mn\}$:
\begin{align*}
  g_\ell = \sum\limits_{t \in Monomials(\fq[X], D)} \mathfrak{b}_t^\ell t + \sum\limits_{1 \leq i \leq m, 1 \leq j \leq n} \mathfrak{e}_{i, j}^\ell u_{i, j}
\end{align*}
and
\begin{align*}
\tilde{\mathcal{S}_{d_c}} = \mathcal{S}_{d_c} + \langle g_1, \dots, g_{mn}\rangle
\end{align*}
where $\langle g_1, \dots, g_{mn}\rangle = \langle g_1,\dots,g_{mn}\rangle_{\fq(\mathfrak{b,e})[\Um,\xv]}\mathcal H_{d_c}$ is a submodule of $\mathcal H_{d_c}$.

As Hilbert  series are invariant if we replace the field of coefficients $\fq$ by  an extension field $\fq(\mathfrak{b},\mathfrak{e})$, and are just divided by $(1-t)^K$ if we add $K$ variables $\xv=(x_1,\dots,x_K)$ to the polynomial ring (new independent variables are non-zero divisors), we have
\begin{align*}
  w\HS_{\mathcal H_{d_c}/\mathcal S_{d_c}}(t) &=  \HS_{\fq[\Um]_{d_c}/\mathcal S_{d_c}}(t^D)\frac{1}{(1-t)^K}
\end{align*}
To compute the Hilbert series for $\tilde{\mathcal S_{d_c}}$, we just have to show that $(g_1,\dots,g_\ell)$ is a regular sequence in $\mathcal H_{d_c}/\mathcal S_{d_c}$.

\begin{proposition}\label{prop:zerodivisor}
	Let $1 \leq \ell \leq mn$. If $g_\ell$ divides zero in $\fq(\mathfrak b, \mathfrak e)[\xv, \Um, \Cm_I] / \left( \mathcal S + \langle g_1, \dots, g_{\ell-1} \rangle \right)$, then there exists a prime ideal $P$ associated to $\mathcal S + \langle g_1, \dots, g_{\ell-1} \rangle$ such that dim$(P) \leq r(n-r) + 1$.
\end{proposition}
\begin{proof}
	If $g_{\ell}$ divides zero in $\fq(\mathfrak b, \mathfrak e)[\xv, \Um, \Cm_I] / \left( \mathcal S + \langle g_1, \dots, g_{\ell-1} \rangle \right)$, then there exists a prime ideal $P$ of $\fq(\mathfrak b, \mathfrak e)[\xv, \Um, \Cm_I]$ associated to $\mathcal S + \langle g_1, \dots, g_{\ell-1} \rangle$ such that $g_{\ell} \in P$ (see \citet[Prop. 1.2.1]{BH98}). For $\ell \leq mn$, let $\mathfrak b^{\leq \ell}$ and $\mathfrak e^{\leq \ell}$ denote the sets 
	\begin{align*}
		\mathfrak b^{\leq \ell} &= \{ \mathfrak b_t^s | t \in Monomials(\fq[\xv], D), 1 \leq s \leq \ell \}\\
		\mathfrak e^{\leq \ell} &= \{ \mathfrak e_{i, j}^s | 1 \leq i \leq m, 1 \leq j \leq n, 1 \leq s \leq \ell \}
	\end{align*}
	Since $\mathcal S + \langle g_1, \dots, g_{\ell-1} \rangle$ is an ideal of $\fq(\mathfrak b^{\leq \ell -1}, \mathfrak e^{\leq \ell -1})[\xv, \Um, \Cm_I]$ and $P$ is an associated prime, there exists a Gröbner basis  $G_P$ of $P + \plucker$ which is a finite subset of $\fq(\mathfrak b^{\leq \ell-1}, \mathfrak e^{\leq \ell-1})[\xv, \Um, c_I]$.
	
	Let $NF_P$ denote the normal form associated to this Gröbner basis. Since $g_{\ell} \in P$, we have $NF_P(g_{\ell}) = 0$ and by linearity of $NF_P$ we obtain :
        \begin{align}
          \sum\limits_{t \in Monomials(\fq[\xv], D)} \mathfrak b_t^\ell NF_P(t) + \sum\limits_{1 \leq i \leq m, 1 \leq j \leq n} \mathfrak e_{i, j}^\ell NF_P(u_{i, j}) = 0.\label{eq:bell}
        \end{align}
	Since $G_P \subset \fq(\mathfrak b^{\leq \ell -1 }, \mathfrak e^{\leq \ell -1})[\xv, \Um, c_I]$ for any monomial $t$, $NF_P(t) \in \fq(\mathfrak b^{\leq \ell -1}, \mathfrak e^{\leq \ell-1})[\xv, \Um, c_I]$. Therefore, by considering~\eqref{eq:bell} as a polynomial in the variables $\mathfrak b^\ell$ and $\mathfrak e^\ell$, all its coefficients must be zero, i.e.  $NF_P(t) = NF_P(u_{i, j}) = 0$ for any $t \in Monomials(\fq[\xv], D), 1 \leq i \leq m, 1 \leq j \leq n$. Then any $u_{i, j} \in \Um$ and $t \in Monomials(\fq[\xv], D)$ is in $P + \plucker$ as an ideal of $\fq(\mathfrak b, \mathfrak e)[\xv, \Um, c_I]$ and then any $u_{i, j} \in \Um$ and $t \in Monomials(\fq[\xv], D)$ is in $P$ as an ideal of $\fq(\mathfrak b, \mathfrak e)[\xv, \Um, \Cm_I]=\fq(\mathfrak b, \mathfrak e)[\xv, \Um, c_I]/\plucker$.
	
	Therefore each variables $u_{i, j}$ and $x_k$ is in $P$ because it is a prime ideal. It follows that
        \begin{align*}
          \text{dim}(P) \leq \text{dim} \left( \fq(\mathfrak b, \mathfrak e)[\xv, \Um, \Cm_I] / \langle \xv, \Um \rangle \right)
        \end{align*}
	where $\fq(\mathfrak b, \mathfrak e)[\xv, \Um, \Cm_I] / \langle \xv, \Um \rangle \simeq \fq(\mathfrak b, \mathfrak e)[\Cm_I]$ is the Plücker algebra, which is of dimension $r(n-r)+1$.
\end{proof}

\begin{lemma}
  For all $1 \leq \ell \leq mn$ and $K \geq m(n-r)$, $g_{\ell}$ does not divide zero in
  \begin{align*}
    \fq(\mathfrak b, \mathfrak e)[\xv, \Um, \Cm_I] / \left( \mathcal S + \langle g_1, \dots, g_{\ell-1} \rangle \right)
  \end{align*}
  and $\dim(\mathcal S + \langle g_1, \dots, g_{\ell} \rangle) = K + r(m+n-r) + 1 - \ell$.
\end{lemma}
\begin{proof}
  We prove it by induction on $\ell$. The case $\ell = 0$ follows
  directly from lemma~\ref{lemma:krulldimension}. Now let us suppose
  that $\mathcal S + \langle g_1, \dots, g_{\ell-1} \rangle$ has
  dimension $K + r(m+n-r)+2 - \ell$. Since
  $\fq(\mathfrak b, \mathfrak e)[\xv, \Um, \Cm_I] / \mathcal S$ is
  Cohen-Macaulay and $\langle g_1, \dots, g_{\ell-1} \rangle$ has
  codimension $\ell-1$ by the induction assumption, the Macaulay
  unmixed theorem from \citet[Corollary 18.14 p.456]{E95} says that each associated prime of
  $\mathcal S + \langle g_1, \dots, g_{\ell-1} \rangle$ has the same
  dimension. Now let us suppose that $g_{\ell}$ divides zero in
  $\fq(\mathfrak b, \mathfrak e)[\xv, \Um, \Cm_I] / \left( \mathcal S
    + \langle g_1, \dots, g_{\ell-1} \rangle \right)$. By proposition~\ref{prop:zerodivisor},
  there exists an associated prime $P$ of
  $\mathcal S + \langle g_1, \dots, g_{\ell-1} \rangle$ of dimension
  less than or equal to $r(n-r) + 1$, which is impossible because
  $\mathcal S + \langle g_1, \dots, g_{\ell-1} \rangle$ has dimension
  $K + r(m+n-r) - \ell + 2 \ge r(n-r)+2$ as $K \geq m(n-r)$, $\ell \leq mn$.
\end{proof}
The property is still valid for each $d_c\ge 1$ on the modules
$\fq(\mathfrak b, \mathfrak e)[\xv, \Um]_{d_c}$. Remember that
$\langle g_1,\dots,g_\ell\rangle$ is different depending on the structure (as an ideal in $\fq(\mathfrak b, \mathfrak e)[\xv, \Um,\Cm_I]$, as a module in $\fq(\mathfrak b, \mathfrak e)[\xv, \Um]_{d_c}$).
\begin{proposition}
  For all $1 \leq \ell \leq mn$ and $K \geq m(n-r)$, $g_{\ell}$ does not divide zero in
  \begin{align*}
    \fq(\mathfrak b, \mathfrak e)[\xv, \Um]_{d_c} / \left( \mathcal S_{d_c} + \langle g_1, \dots, g_{\ell-1} \rangle \right).
  \end{align*}
\end{proposition}

\begin{proof}
	We consider the following application :
	\begin{align*}
		\varphi : \fq(\mathfrak b, \mathfrak e)[\xv, \Um]_{d_c} \rightarrow \fq(\mathfrak b, \mathfrak e)[\xv, \Um, \Cm_I] / \left( \mathcal S + \langle g_1, \dots, g_{\ell-1} \rangle \right)		
	\end{align*}
	
	Firstly, $\left( \mathcal S + \langle g_1, \dots, g_{\ell-1} \rangle \right) \cap  \fq(\mathfrak b, \mathfrak e)[\xv, \Um]_{d_c} = \mathcal S_{d_c} + \langle g_1, \dots, g_{\ell-1} \rangle \cap \fq(\mathfrak b, \mathfrak e)[\xv, \Um]_{d_c}$ so we can factorize $\varphi$ and obtained an injection :
	
	\begin{align*}
		\fq(\mathfrak b, \mathfrak e)[\xv, \Um]_{d_c} / \left( \mathcal S_{d_c} + \langle g_1, \dots, g_{\ell-1} \rangle \right) \hookrightarrow \fq(\mathfrak b, \mathfrak e)[\xv, \Um, \Cm_I] / \left( \mathcal S + \langle g_1, \dots, g_{\ell-1} \rangle \right)
	\end{align*}
	It follows that if $g_{\ell}$ is not a zero-divisor in $\fq(\mathfrak b, \mathfrak e)[\xv, \Um,\Cm_I] / \left( \mathcal S + \langle g_1, \dots, g_{\ell-1} \rangle \right)$ then $g_\ell$ does not divide zero in $\fq(\mathfrak b, \mathfrak e)[\xv, \Um]_{d_c} / \left( \mathcal S_{d_c} + \langle g_1, \dots, g_{\ell-1} \rangle \right)$.
\end{proof}

\begin{corollary}\label{cor:Sdctilde}
  For $K\ge m(n-r)$ we have
  \begin{align*}
    w\HS_{\mathcal H_{d_c}/\tilde{\mathcal S_{d_c}}}(t) &=  \HS_{\fq[\Um]_{d_c}/\mathcal S_{d_c}}(t^D)\frac{(1-t^D)^{mn}}{(1-t)^K}.
  \end{align*}
\end{corollary}

 \subsection{Adding generic polynomials \texorpdfstring{$u_{i,j}-f_{i,j}$}{uij-fij}}
 \label{sec:uij}
 Let us consider the sets of variables
 \begin{align*}
   \mathfrak{a} = \{\mathfrak{a}_{i, j, t} : 1 \leq i \leq m, 1 \leq j \leq n, t \in Monomials(\fq[X], D) \}.
 \end{align*}
Let us define, for all $1 \leq i \leq m, 1 \leq j \leq n$ and $1 \leq l \leq mn$,
\begin{align*}
  f_{i, j} = \sum\limits_{t \in Monomials(\fq[X], D)} \mathfrak{a}_{i, j, t} t
\end{align*}
and
$\tilde{\idealSM_{d_c}} = \mathcal{S}_{d_c} + \langle u_{i, j} - f_{i, j}\rangle$ the submodule of $\fq(\mathfrak{a})[\Um,\xv]_{d_c}$, and $\idealSM_{d_c}=\tilde{\idealSM}_{d_c}\cap\fq(\mathfrak a)[\xv]$ the ideal SM with a matrix $\Fm$ whose entries are polynomials with generic coefficients $\mathfrak a$. 
Under  the hypothesis that $\fq$ is algebraically closed (we then denote it by $\overline{\mathbb{K}}$), we have the following genericity proposition.
We denote by $\mathbb A^{N_1}$ (resp. $\mathbb A^{N_2}$) an affine space of dimension $N_1=mn\binom{K+D-1}{D}$ (resp. $N_2=mn(\binom{K+D-1}{D}+mn)$) over $\fq$. We also reuse from~\cite{FSS13} the notation $\varphi_a$ (resp. $\psi_{b,e}$) for the evaluation morphisms that evaluate polynomials and ideals in $a$ on the variables $\mathfrak a$ (resp. in $b,e$ for the variables $\mathfrak{b,e}$).
\begin{proposition}
  There exist a non-empty Zariski open subset $\mathcal{O} \subset \mathbb A^{N_1}$ such that for all $a \in \mathcal{O}$ we have:
  \begin{align*}
    \HS_{\fq[\xv]_{d_c} / \varphi_a(\idealSM_{d_c})}(t)
    =\HS_{\fq(\mathfrak a)[\xv]_{d_c} / \idealSM_{d_c}}(t)
    = w\HS_{\mathcal H_{d_c} / \tilde{\mathcal{S}_{d_c}}}(t).
  \end{align*}
\end{proposition}
\begin{proof}
  The proof is very similar to that in~\cite{FSS13}.
  First, notice that the polynomials $u_{i,j}-f_{i,j}$ are clearly non-zero divisors in $\fq(\mathfrak a)[\Um,\xv]_{d_c}/(\mathcal S_{d_c} + \langle u_{i',j'}-f_{i',j'}\rangle_{(i',j')\ne (i,j)})$. As the number of these polynomials is the same as the number of variables $\Um$, we have
  \begin{align*}
    \HS_{\fq(\mathfrak a)[\xv]_{d_c} / \mathcal{{I}}_{d_c}}(t)
    &=w\HS_{\fq(\mathfrak a)[\Um,\xv]_{d_c} / \mathcal{\tilde{I}}_{d_c}}(t).
  \end{align*}
  The same is true for any evaluation on $a\in\mathbb A^{N_1}$:
  \begin{align}\label{eq:1}
    \HS_{\fq[\xv]_{d_c} / \varphi_a(\mathcal{{I}}_{d_c})}(t)
    =w\HS_{\fq[\Um,\xv]_{d_c} / \varphi_a(\mathcal{\tilde{I}}_{d_c})}(t),
  \end{align}
 We fix an admissible monomial ordering on $\fq(a)[\Um,\xv]_{d_c}$. Following the proof of~\cite[Lemma 4]{FSS13}, using the properties of Gröbner basis on modules (see e.g.~\cite{BV06}),  there exists a non-empty Zariski open set $\mathcal O_1\subset\mathbb A^{N_1}$ such that, for all $a\in O_1$,
 the modules generated by the leading monomials in $\tilde{\idealSM_{d_c}}$ and $\varphi_a(\tilde{\idealSM_{d_c}})$ are the same, hence their (weighted) Hilbert series are the same:
 \begin{align*}
  w\HS_{\fq[\Um,\xv]_{d_c} / \varphi_a(\mathcal{\tilde{I}}_{d_c})}(t)
=   w\HS_{\fq(\mathfrak{a})[\Um,\xv]_{d_c} / \mathcal{\tilde{I}}_{d_c}}(t),& \forall a\in O_1.
 \end{align*}
 This proves the first equality. The same kind of proof shows that there exists a non-empty Zariski open set $O_2\subset \mathbb A^{N_2}$ such that
 \begin{align}\label{eq:2}
  w\HS_{\fq[\Um,\xv]_{d_c} / \psi_{(b,e)}(\mathcal{\tilde{S}}_{d_c})}(t)
=   w\HS_{\fq(\mathfrak{b,e})[\Um,\xv]_{d_c} / \mathcal{\tilde{S}}_{d_c}}(t),& \forall (b,e)\in O_2.
 \end{align}
 We then use~\cite[Lemma 3]{FSS13}, that is still valid in our context, and says that there exists a 
 non-empty Zariski open set $O_3\subset \mathbb A^{N_1}$, such that for any $a\in O_3$, there exists $(b,e)\in O_2$ such that $\varphi_a(\mathcal{\tilde{I}}_{d_c}) = \psi_{(b,e)}(\mathcal{\tilde{S}}_{d_c})$. Now consider some $a\in O_1\cap O_3$ that is non-empty, and the  $(b,e)\in O_2$ such that $\varphi_a(\mathcal{\tilde{I}}_{d_c}) = \psi_{(b,e)}(\mathcal{\tilde{S}}_{d_c})$, then we get:
 \begin{align*}
   \HS_{\fq[\xv]_{d_c} / \varphi_a(\mathcal{{I}}_{d_c})}(t)
   & =w\HS_{\fq[\Um,\xv]_{d_c} / \varphi_a(\mathcal{\tilde{I}}_{d_c})}(t)&(\text{by } \eqref{eq:1}),\\
   &  =w\HS_{\fq[\Um,\xv]_{d_c} / \psi_{(b,e)}(\mathcal{\tilde{S}}_{d_c})}(t)\\
&=   w\HS_{\fq(\mathfrak{b,e})[\Um,\xv]_{d_c} / \mathcal{\tilde{S}}_{d_c}}(t)&(\text{by } \eqref{eq:2}).
 \end{align*}
 This concludes the proof by definition of $\mathcal H_{d_c}$.
 
\end{proof}
The Hilbert series for $\mathcal H_{d_c}/\tilde{\mathcal{S}}_{d_c}$ is known for $K\ge m(n-r)$ according to corollary~\ref{cor:Sdctilde}, and our experiments presented in Section~\ref{sec:experiments} suggest that  it remains true for any $K\le (m-r)(n-r)$, and for $K\in\{(m-r)(n-r)+1,\dots,m(n-r)-1\}$ provided that $d_c\le m-r$. When this hypothesis is satisfied, the Hilbert series for the $\SM$ system in degree $d_c \geq 1$ in $\Cm$ is given by the formula~\eqref{eq:HSAfinal}.

\section{Application to the Mirath signature scheme}
\label{sec:complexity}
In this section we analyze the complexity of solving the Support Minors system for the Mirath signature scheme, whose security rely on the hardness of uniformly random overdetermined MinRank instances with a unique solution. The parameters chosen for Mirath over $\mathbb F_{16}$ are given in table~\ref{tab:mirath} and are such that  $K<(m-r)(n-r)$.

Over finite fields, the total number of possible systems is finite, and \emph{generic} means that the proportion of systems satisfying a given property is large.
The MinRank instances for Mirath are uniformly generated by design, hence we expect these instances to be generic and use the Hilbert series from this paper to analyze these systems. 
The genericity over a finite field has been discussed and experimentally verified in~\cite{BBCGPSTV20}, and confirmed by our own experiments.
  In particular, for any $d_c$ we can compute a degree\footnote{Note that $d_x^{reg}$ depends on $d_c$.} of regularity $d_x^{reg}$ as  the degree of the Hilbert series~\eqref{eq:HSB} plus one.
  The only difference with really random instances is that the Macaulay matrix of degree $(d_x^{reg},d_c)$ has co-dimension 1 instead of 0. As explained in~\cite[Section 5.8]{BBCGPSTV20},  it is enough to find a non-zero vector in the right kernel of this matrix to get the solution of the system.

  In our context, the number of columns of the Macaulay matrix in bi-degree $(d_x,d_c)$ is given by the number of monomials in this degree, which is
  \begin{align*}
{\tt M}(d_x,d_c) =    \binom{K+d_x-1}{d_x}\det\left(\binom{n+d_c-i}{n-j}_{1\le i,j\le r}\right)
  \end{align*}
  following equation~\eqref{eq:rkCm}.   Its rank is ${\tt M}(d_x,d_c)-1$.
  Its rows correspond to the polynomials~\eqref{eq:SM} multiplied by all the monomials of degree $(d_x-D,d_c-1)$, and its number may be larger than ${\tt M}(d_x,d_c)$.

However, the $F_5$ algorithm from~\citet{F02} contains a criterion that allows to construct submatrices of the Macaulay matrix that have full rank for regular sequences.
Even if such a criterion has not been designed yet for the Support Minors modeling, 
we can expect such results to arise in the coming years (for instance, the syzygies of the system are already known for $d_c=1$ and $d_x\le r+1$, see~\cite{BBCGPSTV20}). See the work~\cite{GNS24} for the Minors system for instance, or~\cite{CGG25} that provides proofs about the syzygies of the SM system for $r=n-1$ or $b=2$. We assume in the following that it is possible to construct a sub-matrix of the Macaulay matrix in degree $(d_x^{reg},d_c)$ that has the same number of rows as its rank.

To compute the solution, we use linear algebra. We can compute an echelon form of the matrix, or use the Wiedemann algorithm~\cite{W86} to compute directly an element in the kernel, in which case the complexity is expressed in terms of the density ${\tt D}_{d_c}$ of the matrix.
Each polynomial in $(\SM)$ has at most ${\tt D}_1=K(r+1)$ non-zero coefficients for $d_c=1$. For $d_c>1$, we first need to multiply the polynomials by monomials in the $c_I$ variables, then apply the straightening law to express the polynomials in the basis of the standard monomials. We do not take into account the cost of this reduction,  it will be smaller than the cost of linear algebra on the Macaulay matrix.
We do not have a bound on the final number of terms, and consider that all the monomials can be present in each equation, i.e. ${\tt D}_{d_c}\le {\tt M}(1,d_c)$ for $d_c>1$.
  The complexity is given by
  \begin{align}
\mathbb C = \min\left(c_\omega {\tt M}(d_x^{reg},d_c)^\omega,
    c{\tt D}_{d_c}{\tt M}(d_x^{reg},d_c)^2\right).\label{eq:wiedemann}
  \end{align}
  where $\omega$ is the exponent of matrix multiplication and $c_\omega$ the constant involved. For instance we can take $\omega=2.81$ and $c_\omega=3$~\cite{JPS13}.
  For the Wiedemann algorithm, $c$ is a small constant.

\begin{figure}[t]
  \centering
  \begin{tikzpicture}
    \begin{axis}[grid= major,
      xlabel = {$r$},
      ylabel = {$\log_2(\mathbb C)$},
      ymin=0, ymax=460, 
      legend entries={$\log_2(\mathbb C)$ Minors, $\log_2(\mathbb C)$ SM},
            legend style={at={(1.745,0.7)},anchor=east}]
      \addplot[mark=none, draw=blue, line width=1pt] coordinates {(1,161.2) (2, 258.4) (3, 321.1) (4, 372.3) (5, 401.8) (6, 424.2) (7, 430.4) (8, 431.7) (9, 420.4) (10, 405.6) (11, 381.0) (12, 353.9) (13, 319.4) (14, 283.4) (15, 242.1) (16, 200.4) (17, 155.7) (18, 111.7) (19, 67.3) (20, 25.5)}; 
      \addplot[mark=none, draw=red, line width=1pt] coordinates {(1, 163.3) (2, 252.3) (3, 315.9) (4, 360.7) (5, 390.4) (6, 407.7) (7, 414.2) (8, 411.5) (9, 400.7) (10, 383.1) (11, 359.6) (12, 331.1) (13, 298.5) (14, 262.7) (15, 224.7) (16, 185.4) (17, 145.6) (18, 106.7) (19, 69.6) (20, 35.8)}; 
    \end{axis}
    \begin{axis}[
      axis y line*=right,
      axis x line*=none,
      ymin=0, ymax=70, 
      ylabel = {$d_{reg}$},
      ylabel style={at={(1.04,0.6)}, anchor=west}, 
      ytick={20,40,60}, 
      legend entries = {$d^M_{reg}$ Minors, $d_{reg}^{SM}$ Support-Minors},
      legend style={at={(1.745,0.4)},anchor=east}]
      \addplot[mark=+, draw= green, only marks, line width=1pt] coordinates {(1, 11) (2, 21) (3, 29) (4, 37) (5, 43) (6, 49) (7, 53) (8, 57) (9, 59) (10, 61) (11, 61) (12, 61) (13, 59) (14, 57) (15, 53) (16, 49) (17, 43) (18, 37) (19, 29) (20, 21)};
      \addplot[mark=+, draw= purple, only marks, line width=1pt] coordinates {(1, 11) (2, 20) (3, 28) (4, 35) (5, 41) (6, 46) (7, 50) (8, 53) (9, 55) (10, 56) (11, 56) (12, 55) (13, 53) (14, 50) (15, 46) (16, 41) (17, 35) (18, 28) (19, 20) (20, 11)};
    \end{axis}
  \end{tikzpicture}
  \caption{For $n = m = 22$, we plot for each value $r\in\{1,\dots,20\}$ and $K=(m-r)(n-r)-1$ the degree of regularity of the generic system for the Minors and SM modelings at $d_c=1$ systems (crosses), as well as the
log of the    complexity $\mathbb C$ given by equation~\eqref{eq:wiedemann} (lines).}
  \label{fig:paraMIRAth}
\end{figure}

We plot in Figure~\ref{fig:paraMIRAth} the best complexities for different  values of $r$, to see the impact of $r$ on the complexity. We chose $m=n=22$, which are the parameters for the Mirath signature scheme over  $\mathbb F_{16}$, and take $K=(m-r)(n-r)-1$. We plot the complexity evaluated with the formula~\eqref{eq:wiedemann} for the Minors and the $\SM$ systems at $d_c=1$, as well as the degree of regularity. We can see that the $\SM$ system behaves better than the Minors one for almost all values of $r$, and that the selected value $r=6$ for Mirath is close to the hardest one.

Over a finite field $\mathbb F_q$, it is always possible to perform a hybrid approach~\cite{BMT23} to reduce the cost of solving $(\SM)$ for parameters $(m,n,K,r)$ to the cost of solving $q^{ar}$ systems of parameters $(m,n-a,K-am,r)$. The security for the Mirath signature scheme was estimated previously  using the MinRank
 CryptographicEstimators V2.0.0\footnote{Code available at \url{https://github.com/Crypto-TII/CryptographicEstimators}.} only for $d_c=1$ and $d_{reg}\le r+1$. With our results, we can extend those estimates to any $d_x$ and $d_c$. The results are given in table~\ref{tab:mirath}: even if we can actually find parameters such that the resulting complexity is better, the gain is only a few bits and for $d_c=1$. This supports the belief that the parameters for Mirath are well chosen. Note that we add 4 bits of complexity to take into account the bit complexity $O(\log_2(q)^2)$ of a field operation in $\mathbb F_{16}$.
 
\begin{table}[t]
  \centering
  \begin{tabular}{|c|c|c|c|c|c|c|c|}
    \hline
    Security & parameters & \multicolumn{3}{c|}{previous results}& \multicolumn{3}{c|}{this paper}\\
    level & $(m,n,K,r)$ & $\log_2$(compl.)   & $d_{reg}$ & $a$& $\log_2$(compl.) & $d_{reg}$ & $a$\\
    \hline
    I & (16, 16, 143, 4) & 166  & 2 & 8 & 164 & 6& 5\\
    \hline
    III & (19, 19, 195, 5) & 227 & 6 & 7 & 227 & 6 & 7\\
    \hline
    V & (22, 22, 255, 6) & 301  & 1 & 11& 298 & 10 & 7\\
    \hline
  \end{tabular}
  \caption{Parameters and security for the Mirath signature scheme. The security level I (resp. III, V) corresponds to a required security of $2^{142}$ (resp. $2^{206}$, $2^{270}$) bit operations.  The best complexity is always at $d_c=1$.}
  \label{tab:mirath}
\end{table}

\section{Experimental results}
\label{sec:experiments}
We have verified experimentally that the
  formula \eqref{eq:HSAfinal} correctly predicts the Hilbert
  series, for small values of the parameters. We limited ourselves to
  $D=1$ for practical reasons. We used the {\tt magma} computer algebra system~\cite{BCP97}.

  We use a naive algorithm: we compute the initial
  equations~\eqref{eq:SM}, multiply them by all monomials to get
  equations of bi-degree $(d_x,d_c)$, reduce the equations by the
  Plücker relations and compute the dimension of the $\fq$-vector
  space generated by the equations in bi-degree $(d_x,d_c)$. We deduce
  from that the coefficient of $t^{d_x}$ of the Hilbert
  series~\eqref{eq:HSAfinal}.

We ran experiments for $q=31$, $(m,n,r)=(5,5,2)$, $(5,5,3)$, $(6,5,2)$,
$(6,5,3)$, $(7,6,2)$, $(7,6,3)$, $K\in\{2,\dots,mn\}$, $d_c\in\{1,\dots,6\}$. 
The Hilbert series was the predicted one for all the proven cases ($K\ge m(n-r)$), but also in the overdetermined cases $K\le (m-r)(n-r)$ (up to a $[]_+$ truncation of the series) and for all cases when $d_c\le m-r$. 

Note that for $K\le (m-r)(n-r)$, for $d_c=m-r$ the Hilbert series is a
constant,
so that if the formula is correct for $d_c=m-r$ it remains true for
$d_c>m-r$.

However, for $K\in\{(m-r)(n-r)+1,\dots,(m-r)r-1\}$ and $d_c> m-r$, the
Hilbert series was different from the predicted one. We do not have
yet an explanation for this phenomenon, but this appears when the
determinant $\det(A_{d_c})$ has negative coefficients and the Hilbert
series has a non-trivial denominator. For instance, for $m=n=K=5$ and
$r=3$, for $d_c=3$ the predicted Hilbert series is
$\HS(t)=\frac{175-175t+50t^2}{(1-t)}$ but the correct one is
$\HS(t)=\frac{175-125t}{(1-t)}$. For $d_c\le m-r$, the determinant
$\det(A_{d_c})$ has only positive coefficients.

The probability of a system to be generic over a finite field was
already experimented in~\cite{BBCGPSTV20}. We ran our own tests over
1000 random instances with the same parameters but larger $d_c$, and
got the same kind of results: most of the time, the rank of the
matrices was as expected. However, in cases where the matrices are
almost square, we had non-generic systems. For instance for
$m=6, n=5$, $K=6$, $r=3$, $d_c\in\{1,2,3\}$, we had $58\%$ of generic
systems over $\mathbb F_4$ and $96\%$ over $\mathbb F_{31}$, for very
small matrices (the same systems are generic for all $d_c$ or are
not).
\section*{Acknowledgments}
This work received funding from the France 2030 program, managed by
the French National Research Agency under grant agreement
No. ANR-22-PETQ-0008 PQ-TLS.

\begin{table}[t]
  \centering
  \begin{tabular}{|c|p{10.9cm}|}
    \hline
    $\Fm = (f_{i,j})$ & a $m\times n$ matrix whose entries are homogeneous polynomials $f_{i,j}\in\fq[\xv]$ of degree $D$\\
    \hline
    $\Cm = (c_{i,j})$ & a $r\times n$ matrix whose entries are variables in $\fq[\Cm]$ \\
    \hline
    $\Um = (u_{i,j})$ & a $m\times n$ matrix whose entries are variables, the entries belong to $\fq[\Um]=\fq[(u_{i,j})_{i,j}]$\\
    \hline
    \hline
    $\fq[\Ym]_{d_c}$ & Sub $\fq[\Ym]$-module of $\fq[\Ym, \Cm_I]$ generated by the monomials in $\Cm_I$ of degree $d_c$ \\
    \hline
    \hline
    $\idealSM$ & Ideal of $\fq[\xv, \Cm_I]$ generated by the Support-Minors equations~\eqref{eq:SM}\\
    \hline
    $\idealSM_{d_c} = \idealSM \cap \fq[\xv]_{d_c}$ & Submodule of $\idealSM\subset\fq[\xv,\Cm_I]$ generated by the Support-Minors equations~\eqref{eq:SM} in degree $d_c$ in the $\Cm_I$\\
    \hline
    $\mathcal J$ & Ideal of $\fq[\xv]$ generated by the $(r+1) \times (r+1)$ minors of $\Fm$\\
    \hline
    \hline
    $\mathcal S_{eq}$ & Set of the \emph{determinantal} Support-Minors equations, as elements of  $\fq[\Um,\Cm_I]\subset\fq[\Um,\Cm]$\\
    \hline
    $\mathcal I_{eq}$ & Ideal of $\fq[\Um, \Cm]$ generated by $\mathcal S_{eq}$\\
    \hline
    $\mathcal Y_{eq}$ & Set of standard monomials $Y=\gamma_1\dots\gamma_t$ with $\gamma_1 \in \Seq$, it is a basis of $\mathcal I_{eq}$ as a $\fq$-vector space.\\
    \hline
    $\mathcal S_{d_c},\; d_c\ge 1$ & Submodule of $\fq[\Um]_{d_c}$, $\mathcal S_{d_c} = \fq[\Um]_{d_c} \cap \mathcal I_{eq}$\\
    \hline
    $\mathcal D$ & Determinantal ideal of $\fq[\Um]$ generated by the $(r+1) \times (r+1)$ minors of $\Um$\\
    \hline
    $\mathcal D_{\Cm,\Um}$ & Determinantal ideal of $\fq[\Um, \Cm]$ generated by the $(r+1) \times (r+1)$ minors of $\binom{\Cm}{\Um}$\\
    \hline
    $\mathcal S$ & Ideal of $\fq[\Um, \Cm_I]$ generated by $\mathcal S_{eq}$ and $\mathcal D$. According to lemma~\ref{lemma:DHinS}, $\mathcal S\cap \fq[\Um]_{d_c}=\mathcal S_{d_c}$.\\
    \hline
    \hline
    $\piS$ & Set of minors which define $\fq[\Um, \Cm_I] / \mathcal S$ as an ASL\\
    \hline
    $\mathcal H_{d_c}$ & $\mathcal H_{d_c} = \fq(\mathfrak b, \mathfrak e)[\Um, \xv]_{d_c}$\\
    \hline
    $g_{\ell}$ & Generic polynomial in $\fq(\mathfrak b, \mathfrak e)[\Um, \xv]$\\
    \hline
    $f_{i, j}$ & Generic polynomial in $\fq(\mathfrak a)[\xv]$\\
    \hline
    $\tilde{\mathcal S}_{d_c}$ & Submodule of $\mathcal H_{d_c}$, $\tilde{\mathcal S}_{d_c} = \mathcal S_{d_c} + \langle g_1, \dots, g_{mn} \rangle$\\
    \hline
    $\tilde{\idealSM}_{d_c}$ & Submodule of $\fq(\mathfrak a)[\Um, \xv]_{d_c}$, $\tilde{\idealSM}_{d_c} = \mathcal S_{d_c} + \langle u_{i,j} - f_{i, j} \rangle$.\\
    \hline
  \end{tabular}
  \caption{Glossary for all structures, systems and ideals used in the paper.}
  \label{table:notations}
\end{table}

\appendix
\section{Appendix}
\label{sec:appendix}
We give an elementary proof of the following lemma, given by \citet{CH94} without proof:
\begin{lemma}
    \begin{align}
    \sum_{k\ge 1} (-1)^{i-k}\binom{i-1}{k-1}\binom{a-k}{b-k} = (-1)^{i-1}\binom{a-i}{b-1} \text{ for all } i\ge 1, a \in \mathbb C, b \in \mathbb N^*.\tag{\ref{eq:combi}}
  \end{align}
\end{lemma}
\begin{proof} We use formal power series in $x$.
  With the change of variable $a\leftarrow a-1, b\leftarrow b-1, i\leftarrow i-1, k\leftarrow k-1$, we have (with the convention $\binom{a}{k}=0$ for $k<0$, $k\in\mathbb Z$):
  \begin{align*}
    \eqref{eq:combi}&\iff   \sum_{k\ge 0} (-1)^{k}\binom{i}{k}\binom{a-k}{b-k} = \binom{a-i}{b} \text{ for all } i\ge 0, a \in \mathbb C, b \in \mathbb N.\\
    &\iff \sum_{b\ge 0} \sum_{k\ge 0} (-1)^{k}\binom{i}{k}\binom{a-k}{b-k}x^b = \sum_{b\ge 0}\binom{a-i}{b}x^b \text{ for all } i\ge 0, a \in \mathbb C\\
                    &\iff \sum_{k\ge 0}(-1)^{k}\binom{i}{k} \sum_{b\ge k} \binom{a-k}{b-k}x^b = (1+x)^{a-i} \text{ for all } i\ge 0, a \in \mathbb C
  \end{align*}
  But the left hand side is
  \begin{align*}
    \sum_{k\ge 0}(-1)^{k}\binom{i}{k} \sum_{b\ge k} \binom{a-k}{b-k}x^b
    &= \sum_{k\ge 0}(-1)^{k}\binom{i}{k} x^k \sum_{b\ge 0} \binom{a-k}{b}x^b 
      = \sum_{k\ge 0}(-1)^{k}\binom{i}{k} x^k(1+x)^{a-k}\\
    &= (1+x)^a \sum_{k\ge 0}(-1)^{k}\binom{i}{k} \left(\frac{x}{1+x}\right)^k = (1+x)^a \left(1-\frac x{1+x}\right)^i\\
    &= (1+x)^a\frac1{(1+x)^i}=(1+x)^{a-i}.
  \end{align*}
\end{proof}

\begin{figure}[t]
  \centering
  \begin{tikzpicture}
  \matrix (mat) [row sep=0.3cm]
  { &&&&\node (a) {$(321|123)$};&&\\
    &&&\node (b) {$(421|123)$};&&\\
    &&\node (c1){$(431|123)$}; && \node(c2){\textcolor{red}{$(21|12)$}};\\
    &\node (d1){$(432|123)$}; && \node(d2){$(31|12)$};&& \node(d3){\textcolor{red}{$(21|13)$}};\\
    &&\node (e1){$(32|12)$}; & \node(e2){$(41|12)$}; & \node(e3){$(31|13)$};&&\node (e4){\textcolor{red}{$(21|23)$}};\\
    &\node(f1){$(42|12)$}; && \node(f2){$(32|13)$}; &\node(f3){$(41|13)$};&\node(f4){$(31|23)$};\\
    \node(g1){\textcolor{blue}{$(43|12)$}};&&\node(g2){$(42|13)$};&&\node(g3){$(32|23)$};&\node(g4){$(41|23)$};&\node(A1){$(1|1)$};\\
    &\node(h1){\textcolor{blue}{$(43|13)$}}; &&  \node(h2){$(42|23)$};&&\node(B1){$(2|1)$};&\node(B2){$(1|2)$};\\
&&    \node(i1){\textcolor{blue}{$(43|23)$}};&&\node(C1){\textcolor{blue}{$(3|1)$}};&\node(C2){$(2|2)$};&\node(C3){$(1|3)$};\\
&&&\node(D1){\textcolor{blue}{$(4|1)$}};&\node(D2){\textcolor{blue}{$(3|2)$}};&\node(D3){$(2|3)$};\\
&&&\node(E1){\textcolor{blue}{$(4|2)$}};&\node(E2){\textcolor{blue}{$(3|3)$}};\\
&&&\node(F1){\textcolor{blue}{$(4|3)$}};\\
  };
  \draw (a)--(b);
  \draw (b)--(c1)--(d2)--(c2)--(b);
  \draw (c1)--(d2)--(e1)--(d1)--(c1);
  \draw (c2)--(d2)--(e3)--(d3)--(c2);
  \draw (d2)--(e2)--(f3)--(e3)--(d2);
  \draw (d3)--(e3)--(f4)--(e4)--(d3);
  \draw (e1)--(f1)--(g1)--(h1)--(g2)--(f2)--(e1);
  \draw (e3)--(f3);
  \draw (e3)--(f2)--(g3)--(f4)--(e3);
  \draw (e2)--(f1)--(g2)--(f3)--(g4);
  \draw (g2)--(h2)--(i1)--(h1);
  \draw (g3)--(h2)--(g4)--(f4);
  \draw (f3)--(A1)--(B1)--(C1)--(D1)--(E1)--(F1);
  \draw (g4)--(B2)--(C2)--(D2)--(E2)--(F1);
  \draw (g2)--(B1)--(C2)--(D3);
  \draw (h2)--(C2);
  \draw (h1)--(C1)--(D2)--(E2);
  \draw (i1)--(D2)--(E1);
  \draw (A1)--(B2)--(C3)--(D3)--(E2);
  \end{tikzpicture}
  \caption{Hasse diagram for the minors for $m=2$, $n=3$, $r=2$. Elements in red are in \textcolor{red}{$\piC$}, in blue in \textcolor{blue}{$\piU$}. A line means that the element below is a cover for the one above.} 
  \label{fig:hasse}
\end{figure}

\end{document}